\documentclass[10pt, twocolumn]{article}

\usepackage[a4paper,margin=0.75in]{geometry}
\usepackage{graphicx}
\usepackage{amsmath}
\usepackage{amssymb}
\usepackage{mathtools}
\usepackage{subcaption}
\usepackage{siunitx}
\usepackage{xcolor}
\usepackage{url}
\usepackage{authblk}

\newcommand{\bm}{\boldsymbol}
\newtheorem{theorem}{Theorem}
\newtheorem{lemma}{Lemma}
\newtheorem{proposition}{Proposition}
\newtheorem{definition}{Definition}

\newtheorem{remark}{Remark}
\newenvironment{proof}{\par\noindent\textit{Proof.}\ }{\hfill$\square$\par}

\title{Angle-based Localization and Rigidity Maintenance Control for Multi-Robot Networks} 

\author[1]{J. Francisco Presenza
\thanks{
This work was partially supported by Universidad de Buenos Aires [grant number UBACyT 20020220100053BA]. L. Colombo acknowledge financial support from Grant PID2022-137909NB-C21 funded by MCIN/AEI/ 10.13039/501100011033 and iRoboCity2030-CM, Robótica Inteligente para Ciudades Sostenibles (TEC-2024/TEC-62). 
Corresponding author J.~Francisco Presenza.}
} 
\author[2]{Leonardo J. Colombo}
\author[3]{Ignacio Mas}
\author[3]{Juan I. Giribet}

\affil[1]{\normalsize Institute of Engineering Technology and Sciences ``Hilario Fern\'andez Long'' (CONICET-UBA) - Av. Paseo Col\'on 850, C1063ACV, Buenos Aires, Argentina.}  
\affil[2]{\normalsize Centre for Automation and Robotics (CSIC-UPM) - Ctra. M300 Campo Real, Km 0.200, Arganda del Rey, 28500, Madrid, Spain.}             
\affil[3]{\normalsize Artificial Intelligence and Robotics Laboratory (UDESA) and CONICET - Vito Dumas 284, B1644BID, Victoria, Argentina.}        

\date{}

\begin{document}
\maketitle

\begin{abstract}
In this work, we study angle-based localization and rigidity maintenance control for multi-robot networks.
First, we establish the relationship between angle rigidity and bearing rigidity considering \textit{directed} sensing graphs and \textit{body-frame} bearing measurements in both $2$ and $3$-\textit{dimensional space}.
In particular, we demonstrate that a framework in $\mathrm{SE}(d)$ is infinitesimally bearing rigid if and only if it is infinitesimally angle rigid and each robot obtains at least $d-1$ bearing measurements ($d \in \{2, 3\}$).
Building on these findings, this paper proposes a distributed angle-based localization scheme and establishes local exponential stability under switching sensing graphs, requiring only infinitesimal angle rigidity across the visited topologies.
Then, since the set of available angles strongly depends on the robots' spatial configuration due to sensing constraints, we investigate rigidity maintenance control.
The \textit{angle rigidity eigenvalue} is presented as a metric for the degree of rigidity.
A decentralized gradient-based controller capable of executing mission-specific commands while maintaining a sufficient level of angle rigidity is proposed.
Simulations were conducted to evaluate the scheme's effectiveness and practicality.
\end{abstract}

\section{Introduction}
\label{sec:introduction}

Cooperative localization is a fundamental problem in multi-robot systems, enabling the decentralized estimation of robot positions and orientations using only inter-agent measurements.
Over the past decade, extensive research has focused on \textit{bearing-based} methods, which exploit line-of-sight direction measurements between robots.
Recently, \textit{angle-based} approaches, which employ the angles between pairs of measured bearings, have emerged as a particularly attractive alternative.
\textit{Rigidity theory} studies whether the available measurements (bearings or angles) are sufficient to infer the robots’ positions and, depending on the context, the orientations as well.

Early approaches to bearing-based localization commonly assumed that all measurements were expressed in a common reference frame, implying that each agent knows its orientation.
This allows for the modeling of the robot states as elements of $\mathbb{R}^d$ and the use of undirected sensing networks, since the two bearings corresponding to a pair of robots carry the same information.
Such a setup resulted in protocols with global convergence guarantees under mild conditions, e.g., containing a spanning Laman subgraph; see \cite{Zhao2019CSM} and references therein.
However, this assumption can be hard to meet in practice since the sensors used for absolute orientation estimation, such as compasses, are either unavailable or strongly affected by magnetic disturbances.
Additionally, due to sensor limitations, such as narrow camera fields of view, bearing measurements are generally not bidirectional; that is, robot $i$ may observe robot $j$, but not vice versa.
For these reasons, real-world multi-robot systems must handle unknown robot poses in $\mathrm{SE}(d)$, along with directed sensing graphs.

Following this, many works have proposed exploiting the body-frame bearing measurements for orientation estimation.
For example, \cite{Piovan2013AUT} restricts the robot poses to $\mathrm{SE}(2)$ and considers that all bearing measurements are bidirectional---rendering an undirected sensing graph.
Under these conditions, orientations can be estimated from bearings, provided that the graph is connected.
In $\mathrm{SE}(3)$, bidirectional bearings and graph connectivity are not sufficient; hence, \cite{VanTran2018CDC,Leonardos2019CDC,VanTran2020AUT,Cao2021AUT,Boughellaba2023CSL} require more involved sensing topologies, e.g., some pairs of robots with bidirectional bearings and common neighbors.
Other schemes consider directed graphs under the more general condition of infinitesimal bearing rigidity, but these approaches are restricted to $\mathrm{SE}(2)$ \cite{Zelazo2014ECC} and $\mathbb{R}^3 \times \mathbb{S}^1$ \cite{Schiano2016IROS}.
The primary limitation of these approaches arises from the constraints associated with estimating orientations from bearings.
Indeed, they require at least $d-1$ observed neighbors per robot, constraining the robots' freedom of motion, as all cameras must be engaged in mutual sensing.

Subsequent strategies assumed the availability of relative orientation measurements to estimate all orientations in a common frame \cite{Trinh2018CCTA,Li2020TCNS,Boughellaba2025TAC}.
Once the orientations are recovered, common-frame bearing-based localization can be implemented to estimate the robots positions with global convergence guarantees. 
While an appealing alternative, this approach introduces an extra layer of complexity since the estimation must be performed on the manifold $\mathrm{SE}(d)$ and because orientation measurement noise propagates into the position estimates.

More recently, motivated by these challenges, several works have focused on angle-based localization; see \cite{Jing2021TAC,Chen2022AUTa,Chen2022AUTb,Chen2022TSP}.
These angles are independent of the reference frame; hence, orientation estimation procedures can be avoided, simplifying the description and reducing the system’s dimensionality.
As a consequence, angle-based localization does not require a minimum number of bearing measurements per robot (some robots might not acquire bearing measurements at all), thus increasing their freedom of motion. 
Although existing work provides solid theoretical foundations, several shortcomings prevent their use in real-world applications.
For example, the localization schemes found in \cite{Jing2021TAC,Chen2022AUTa,Chen2022AUTb,Chen2022TSP} are only applicable in $\mathbb{R}^2$ or require special topologies such as triangular and tetrahedral in $\mathbb{R}^2$ and $\mathbb{R}^3$, respectively.
Developing localization schemes that allow for arbitrary directed and time-varying topologies can significantly improve the applicability of these systems.

Despite the close connection between bearing-based and angle-based coordination, their relationship remains poorly understood, except under restrictive assumptions. 
For example, recent works have established the equivalence between bearing rigidity and angle rigidity \cite{Jing2019AUT} or signed angle rigidity \cite{Jinpeng2025CDC} but only in $\mathbb{R}^2$.
Additionally, these findings assume that all bearings are measured in a common frame, constraining the analysis to undirected graphs.
In contrast, the results developed in this paper consider \textit{directed} sensing graphs and \textit{body-frame} bearing measurements in both $2$ and $3$-dimensional space.
In particular, as our first main contribution, we show that in $\mathrm{SE}(d)$ ($d \in \{2, 3\}$), infinitesimal bearing rigidity holds if and only if the framework is infinitesimally angle rigid and each robot obtains at least $d-1$ bearing measurements. 
To our knowledge, this is the first formal equivalence between bearing rigidity and angle rigidity that naturally applies to vision-based multi-robot systems since it (a) does not require bidirectional bearing measurements; (b) deals with body-frame measurements; and (c) works beyond the plane.
In addition, as our second main contribution, we propose a distributed angle-based localization scheme and establish local exponential stability under directed switching topologies, provided that all visited graphs are infinitesimally angle rigid.

Finally, we design a control scheme that enables robots to follow mission-related trajectories while guaranteeing infinitesimal angle rigidity.
Indeed, as our third main contribution, we extend the rigidity maintenance techniques proposed in \cite{Zelazo2015IJRR,Schiano2017ICRA} to the case of angle measurements.
We present the angle rigidity eigenvalue and propose its use as a metric for the degree of rigidity.
A decentralized gradient-based controller capable of executing mission-specific commands while maintaining a sufficient level of rigidity is proposed.
The controller preserves angle rigidity by keeping the rigidity eigenvalues above a threshold.
We note that localizability conditions less restrictive than rigidity have been studied in \cite{DeCarli2023MRS} and \cite{Colombo2019CDC} for the cases of bearing and distance measurements, respectively. 
However, these works do not aim to guarantee a minimum level of localizability, which limits their applicability under sensing constraints.
To validate the proposed approach, a mission involving multiple target tracking is proposed, which produces changes in the sensing topology that challenge the rigidity condition and test the capabilities of the controller.
Simulations were performed to evaluate the effectiveness of the scheme.

The remainder of this paper is organized as follows. Section \ref{sec:preliminaries} presents basic definitions and an overview of the bearing and angle rigidity theory in $\mathrm{SE}(d)$. 
Section \ref{sec:bearing_vs_angle} establishes the equivalence between bearing rigidity and angle rigidity. 
The angle-based position estimator and the rigidity maintenance controller are introduced in Section \ref{sec:control}, together with a discussion of their convergence and decentralized implementation. 
The simulation results are presented in Section \ref{sec:validation}, and Section \ref{sec:conclusions} provides final conclusions and directions for future work.

\section{Preliminaries and Notation}
\label{sec:preliminaries}

Lowercase letters without a superscript, e.g., $x \in \mathbb{R}^d$ are used to express a vector in a common reference frame associated with the standard basis $\{e_1, \dots, e_d\} \subset \mathbb{R}^d$.
A column stack of $n$ vectors $v_1, \ldots, v_n$ is represented as $[v_1; \dots; v_n]$ and also as $[v_i]_{i \in \mathcal{I}}$ given $\mathcal{I} = \{1, \ldots, n\}$.
The vector of all zeros and all ones are $\bm{0}_n, \bm{1}_n \in \mathbb{R}^n$, respectively, and the identity matrix is $\bm{I}_n \in \mathbb{R}^{n \times n}$.
The null space of a matrix is $\mathrm{Null}(A)$, and the Kronecker product is symbolized as $A \otimes B$.
The unit $d$-sphere is $\mathbb{S}^{d-1} = \{v \in \mathbb{R}^d : \| v \| = 1\}$.

Let $\mathcal{M}$ be a smooth manifold, then the tangent and normal spaces at $x \in \mathcal{M}$ are denoted as $T_x \mathcal{M}$ and $N_x \mathcal{M}$, respectively.
The gradient at $x$ of a differentiable scalar-valued function $f : \mathcal{M} \to \mathbb{R}$ is defined as the unique element of $T_x \mathcal{M}$ that satisfies
\begin{equation*}
    \langle \nabla_x f(x), v \rangle = (\mathrm{d}_x f) (v), \ \forall v \in T_x \mathcal{M},
\end{equation*}
where $\mathrm{d}_xf$ is the differential of $f$ at $x$ and $\langle \cdot, \cdot \rangle$ denotes a Riemannian metric on $\mathcal{M}$.
For more details, see \cite{Lee2013ISM}.

Let $\mathrm{SE}(d) = \mathbb{R}^d \times \mathrm{SO}(d)$ be the special Euclidean group.
Here $\mathrm{SO}(d)$ corresponds to the Lie group of rotation matrices,
and its Lie algebra $\mathfrak{so}(d)$ denotes the space of skew-symmetric matrices, both of dimension $d' \coloneqq \binom{d}{2}$. 
Now, let $S: \mathbb{R}^{d'} \to \mathfrak{so}(d)$ be the invertible linear map such that for all $x \in \mathbb{R}^{d'}$ and $y \in \mathbb{R}^d$, $S(x) y = x [-y_2; y_1]$ when $d=2$, and $S(x) y = x \times y$ when $d=3$.
It follows that $T_R\mathrm{SO}(d)=\{S(\omega) R : \omega \in \mathbb{R}^{d'}\}$.

In this work, the sensing network is modeled by a \textit{directed graph} denoted as $\mathcal{G} = (\mathcal{V}, \mathcal{E})$ with a vertex set $\mathcal{V} = \{1, \ldots, N\}$ and an edge set $\mathcal{E} \subseteq \mathcal{V} \times \mathcal{V}$.
Here, $(i, j) \in \mathcal{E}$ means that agent $i$ detects agent $j$ with its bearing sensor.
The use of directed graphs allows us to account for the non-reciprocity of the bearing measurements.
The set of (outgoing) neighbors of $i$ is $\mathcal{O}_i = \{j : (i, j) \in \mathcal{E}\}$.
Let $(p_i, R_i)$ represent the pose of robot $i$, with position $p_i \in \mathbb{R}^d$ and orientation $R_i \in \mathrm{SO}(d)$; and $(\bm p, \bm R)$ with  $\bm p = [p_i]_{i \in \mathcal{V}} \in \mathbb{R}^{d|\mathcal{V}|}$ and $\bm R = [R_i]_{i \in \mathcal{V}} \in \mathrm{SO}(d)^{|\mathcal{V}|}$ represents the joint configuration.

A \textit{framework} is a pair consisting of a graph and a joint pose. Depending on the context, it may be written as $(\mathcal G, (\bm p, \bm R))$ when orientations are included, or as $(\mathcal G, \bm p)$ when only positions are considered.
The body-frame bearing acquired when $j \in \mathcal{O}_i$ is described by the unit vector $\beta^i_{ij} \coloneqq R_i^\top \beta_{ij}$ where $\beta_{ij} \coloneqq p_{ij} / d_{ij}$, $p_{ij} \coloneqq p_j - p_i$, and $d_{ij} \coloneqq \|p_{ij}\|$.
Throughout this work, we assume that $p_i \neq p_j$ for all $i, j \in \mathcal{V}$.
The (cosine of the) angle measurement obtained when $\{j, k\} \subseteq \mathcal{O}_i$ is $\alpha_{ijk} \coloneqq \beta_{ij}^\top \beta_{ik} = (\beta^i_{ij})^\top \beta^i_{ik}$.
The set of all angle indices is $\mathcal{A} \coloneqq \{(i, j, k) : \{j, k\} \subseteq \mathcal{O}_i, j < k\}$.
Note that we define $\mathcal A$ to include all angles induced by a given edge set $\mathcal E$, as this is the most natural way to represent observed angles.

\subsection{Bearing Rigidity}

Bearing rigidity in $\mathrm{SE}(d)$ studies the conditions under which robot poses can be reconstructed from body-frame bearing measurements up to a similarity transformation; that is, any combination of uniform translations, rotations, reflections, and scalings. 

Consider a framework $(\mathcal G, (\bm p, \bm R))$ over $\mathrm{SE}(d)$, $d \in \{2, 3\}$.
Let $\bm{\beta}_{\mathcal G} : (\mathrm{SE}(d))^{|\mathcal{V}|} \to (\mathbb{S}^{d-1})^{|\mathcal{E}|}$ be such that $\bm{\beta}_{\mathcal G}(\bm p, \bm R) = (\beta^i_{ij})_{(i, j) \in \mathcal{E}}$ is denoted as the \textit{bearing function}.
The \textit{bearing rigidity matrix} $B_{\mathcal G}(\bm p, \bm R) \in \mathbb{R}^{d|\mathcal{E}| \times (d+d')|\mathcal{V}|}$ is the differential of $\bm{\beta}_{\mathcal{G}}$, which maps robot linear and angular velocities to bearing velocities,
\begin{equation*}
    \dot{\bm \beta}_{\mathcal G} = B_{\mathcal G}(\bm p, \bm R) \begin{bmatrix}\bm{v} \\ \bm{\omega}\end{bmatrix},
\end{equation*}
where $\bm{v} = [v_i]_{i \in \mathcal{V}} \in \mathbb{R}^{d|\mathcal{V}|}$ and $\bm{\omega} = [\omega_i]_{i \in \mathcal{V}} \in \mathbb{R}^{d'|\mathcal{V}|}$.
It is useful to make the structure of the rigidity matrix explicit.
To this end, let $B_{ij}(\bm p, \bm R)$ be the $d \times d|\mathcal{V}|$ block row corresponding to the edge $(i, j) \in \mathcal{E}$.
It follows that
\begin{equation}
    \!\dot{\beta}^i_{ij}\!=\!B_{ij} (\bm p, \bm R)\!\begin{bmatrix}\bm{v} \\ \bm{\omega}\end{bmatrix}\!=\!
    R_i^\top\!\left(\!\frac{P_{ij}}{d_{ij}} (v_j\!-\!v_i)\!-\!S(\omega_i)\beta_{ij}\!\right)\!, 
    \label{eq:bearing_rigidity_perturbation}
\end{equation}
where $P_{ij} \coloneqq \bm{I}_d - \beta_{ij} \beta_{ij}^\top$ is the orthogonal projection matrix onto $\operatorname{span}\{\beta_{ij}\}^{\perp}$.
We will also use this matrix in the robot's $i$ reference frame, i.e., $P^i_{ij} \coloneqq \bm{I}_d - \beta^i_{ij} (\beta^i_{ij})^\top$.

Consider an infinitesimal motion of $(\bm p, \bm R)$ generated by a similarity transformation. The space of all such motions is
\begin{equation}
\begin{split}
    \mathcal{T}_b(\bm p, \bm R)\!\coloneqq\!\big\{&[\bm{v}; \bm{\omega}]\!: v_i \bm\!= \kappa p_i + S(\omega) p_i + v, \ \omega_i\!= \omega, \\
    & \kappa \in \mathbb{R}, \ \omega \in \mathbb{R}^{d'}, \ v \in \mathbb{R}^d, \ \forall i \in \mathcal{V}\big \}.
\end{split}
\end{equation}
It follows that $\operatorname{dim} \mathcal{T}_b(\bm p, \bm R) = d + d' + 1$ and $\mathcal{T}_b(\bm p, \bm R) \subseteq \operatorname{Null} B_{\mathcal G}(\bm p, \bm R)$; see \cite{Michieletto2016CDC}.
We are now ready to define infinitesimal bearing rigidity in $\mathrm{SE}(d)$.

\begin{definition}
    A framework $(\mathcal G, (\bm p, \bm R))$ is said to be infinitesimally bearing rigid (IBR) if $\operatorname{Null} B_{\mathcal G}(\bm p, \bm R) = \mathcal{T}_b(\bm p, \bm R)$.
    Equivalently, if $\operatorname{rank} B_{\mathcal G}(\bm p, \bm R) = (d+d')|\mathcal{V}| - (d + d' + 1)$.
\end{definition}
Infinitesimal bearing rigidity allows for the local recovery of the true poses up to a similarity transformation.
This is formalized by the following proposition, which follows directly from \cite[Theorem 4]{Michieletto2021TCNS}.
\begin{proposition}
    Let $(\mathcal G, (\bm p, \bm R))$ be an infinitesimally bearing rigid framework. Then, there exists an open neighborhood $\mathcal{U}$ of $(\bm p, \bm R)$ such that if $(\bm q, \bm Q) \in \mathcal{U}$ and $\bm{\beta}_{\mathcal G}(\bm p, \bm R) = \bm{\beta}_{\mathcal G}(\bm q, \bm Q)$ then $(\bm p, \bm R)$ and $(\bm q, \bm Q)$ are related by a similarity transformation.
    \label{prop:inf_bearing_rigidity}
\end{proposition}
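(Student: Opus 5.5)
The plan is to read Proposition~\ref{prop:inf_bearing_rigidity} as a local rigidity statement and derive it from infinitesimal bearing rigidity with a constant-rank argument on the quotient by similarities, as in \cite[Theorem 4]{Michieletto2021TCNS}. Write $\mathcal{M} = (\mathrm{SE}(d))^{|\mathcal V|}$, restricted to the open set where $p_i \neq p_j$. Let $\mathcal{S}$ be the similarity group, of dimension $d+d'+1$, acting by $(p_i, R_i) \mapsto (s Q p_i + t,\, Q R_i)$ with $s>0$, $Q \in \mathrm{O}(d)$ (reflections handled separately as a discrete component), and $t \in \mathbb{R}^d$.

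First I will check that $\bm\beta_{\mathcal G}$ is invariant under the identity component of $\mathcal S$: body-frame bearings $R_i^\top \beta_{ij}$ do not change under uniform translation and scaling, and a common rotation cancels. The orbit $\mathcal{O}$ of $(\bm p,\bm R)$ is then an embedded submanifold whose tangent space is exactly $\mathcal T_b(\bm p,\bm R)$. This uses the fact that the action is free, which holds because the positions are not all equal, so scaling and translation are detectable and a common rotation is detectable through $\bm R$.

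Next I will take a local slice $\Sigma$ through $(\bm p,\bm R)$, transversal to $\mathcal O$, with $T\Sigma \oplus \mathcal T_b = T\mathcal M$. For instance, fix $R_1$, fix $p_1$, and fix the scale $\sum_i \|p_i-p_1\|^2$. By the slice theorem for proper free actions, every point near $(\bm p,\bm R)$ can be moved by a similarity near the identity onto a unique point of $\Sigma$. The restriction $\bm\beta_{\mathcal G}|_\Sigma$ has differential $B_{\mathcal G}$ restricted to $T\Sigma$. Since $\operatorname{Null} B_{\mathcal G} = \mathcal T_b$ and $T\Sigma\cap\mathcal T_b=\{0\}$, this differential is injective. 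An injective differential makes $\bm\beta_{\mathcal G}|_\Sigma$ an immersion at the base point, hence locally injective on a neighborhood $\mathcal W\subset\Sigma$.

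Finally, I set $\mathcal U$ to be the saturation of $\mathcal W$ by similarities near the identity. If $(\bm q,\bm Q)\in\mathcal U$ has the same bearings, I move it to $\Sigma$ by a similarity, which preserves bearings. Local injectivity then forces it to coincide with the slice representative of $(\bm p,\bm R)$, so the two configurations differ by a similarity.

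The main obstacle is the slice step: I need the group action to be free and proper near the base configuration, so that the local quotient is a manifold and the orbit map is a local diffeomorphism onto $\mathcal U$. I would verify this directly by constructing the slice explicitly through the normalizations above. The inverse function theorem gives a local diffeomorphism $\mathcal S_0\times\Sigma\to\mathcal M$, because its differential at the identity is the direct sum $\mathcal T_b\oplus T\Sigma$. That avoids invoking general theory.
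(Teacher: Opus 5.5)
Your argument is correct, and it takes a different route from the paper. The paper gives no proof of its own and simply invokes \cite[Theorem 4]{Michieletto2021TCNS}.

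The standard route to such a statement is the Asimow--Roth constant-rank argument. Since $\mathcal{T}_b(\bm p,\bm R) \subseteq \operatorname{Null} B_{\mathcal G}$ with $\dim \mathcal{T}_b = d+d'+1$ at every collision-free configuration, $\operatorname{rank} B_{\mathcal G} \le (d+d')|\mathcal V| - (d+d'+1)$ everywhere. Infinitesimal bearing rigidity says this bound is attained, so by lower semicontinuity the rank is constant near $(\bm p,\bm R)$. The constant rank theorem then makes the fiber $\bm\beta_{\mathcal G}^{-1}(\bm\beta_{\mathcal G}(\bm p,\bm R))$ locally a $(d+d'+1)$-dimensional submanifold. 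The similarity orbit is a submanifold of the same dimension contained in that fiber, so the two coincide near $(\bm p,\bm R)$.

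You instead quotient out the symmetry first. You check that the explicit slice (fix $R_1$, $p_1$ and $\sum_i\|p_i-p_1\|^2$) is complementary to $\mathcal T_b$. The inverse function theorem applied to $(g,\sigma)\mapsto g\cdot\sigma$ then gives the local product structure, and after that you only need that an immersion is locally injective. Your route costs the slice construction and the injectivity of the infinitesimal action, which holds because at least two positions differ. In exchange, it needs neither the everywhere rank bound nor the dimension/openness comparison between orbit and fiber. It also gives slightly more: the similarity relating the two configurations is near the identity, hence orientation-preserving.

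Two minor points. First, an improper $Q\in\mathrm{O}(d)$ does not act on $\mathrm{SO}(d)$-valued orientations, so rather than ``handling reflections separately'' just restrict to the identity component, which is all you use. Second, neither properness nor uniqueness of the slice representative is needed, because the local diffeomorphism from the inverse function theorem already supplies everything your final step uses.
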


\subsection{Angle Rigidity}

Angle rigidity studies the conditions under which robot positions can be reconstructed from angle measurements, up to a similarity transformation.
Angle measurements are independent of the robots' body frames; hence, the robots can  be modeled as $\bm p \in \mathbb{R}^{d|\mathcal{V}|}$.
Some approaches, such as \cite{Chen2022TSP}, incorporate a common reference direction (e.g., gravity), which provides additional information and can relax the requirements on the sensing topology.
In this work, however, we assume no prior knowledge of the robots’ orientations, as this corresponds to the most general setting.

Consider a framework $(\mathcal G, \bm p)$ and let $\mathcal A$ be the set of angle indices induced by $\mathcal E$.
Let $\bm{\alpha}_{\mathcal G} : \mathbb{R}^{d|\mathcal{V}|} \to \mathbb{R}^{|\mathcal{A}|}$ such that $\bm{\alpha}_{\mathcal G}(\bm p) = (\alpha_{ijk})_{(i, j, k) \in \mathcal{A}}$ be the \textit{angle function}, and let the \textit{angle rigidity matrix} $A_{\mathcal G}(\bm p) \in \mathbb{R}^{|\mathcal{A}| \times d|\mathcal V|}$ be the differential of $\bm{\alpha}_{\mathcal G}$.
The latter relates tangent vectors, i.e., $\dot{\bm \alpha}_{\mathcal G} = A_{\mathcal G}(\bm p) \bm{v}$, where $\bm{v} = [v_i]_{i \in \mathcal{V}} \in \mathbb{R}^{d|\mathcal{V}|}$. 
To make the structure of the rigidity matrix explicit, consider $A_{ijk}(\bm p) \in \mathbb{R}^{1 \times d|\mathcal{V}|}$, i.e., the row corresponding to $(i, j, k) \in \mathcal{A}$.
Then,
\begin{equation}
    \dot{\alpha}_{ijk}\!=\!A_{ijk} (\bm p) \bm{v} =  
    \beta_{ik}^\top \frac{P_{ij}}{d_{ij}} (v_j - v_i) + \beta_{ij}^\top \frac{P_{ik}}{d_{ik}} (v_k - v_i).
    \label{eq:angle_rigidity_perturbation}
\end{equation}
In the angle case, the space of infinitesimal motions induced by similarity transformations is
\begin{equation}
\begin{split}
    \mathcal{T}_a(\bm p)\!\coloneqq\!\big\{&\bm{v}\!: v_i \bm\!= \kappa p_i + S(\omega)p_i + v, \\
    &\kappa \in \mathbb{R}, \ \omega \in \mathbb{R}^{d'}, \ v \in \mathbb{R}^d, \ \forall i \in \mathcal{V}\big\}.
\end{split}
\label{eq:angle_inf_trivial}
\end{equation}
It follows that $\operatorname{dim} \mathcal{T}_a(\bm p) = d + d' + 1$ and $\mathcal{T}_a(\bm p) \subseteq \operatorname{Null} A_{\mathcal G}(\bm p)$.
We define infinitesimal angle rigidity as follows.

\begin{definition}
    A framework $(\mathcal G, \bm p)$ in $\mathbb{R}^d$ is said to be infinitesimally angle rigid (IAR) if $\operatorname{Null} A_{\mathcal G}(\bm p) = \mathcal{T}_a(\bm p)$.
    Equivalently, if $\operatorname{rank}A_{\mathcal G}(\bm p) = d|\mathcal{V}| - (d + d' + 1)$.
    \label{def:inf_ang_rigid}
\end{definition}

Infinitesimal angle rigidity enables the local recovery of the true positions up to a similarity transformation.
\begin{proposition}[{\cite[Theorem 2]{Jing2019AUT}}]
    Let $(\mathcal G, \bm p)$ be an infinitesimally angle rigid framework. Then, there exists an open neighborhood $\mathcal{U}$ of $\bm p$ such that if $\bm q \in \mathcal{U}$ and $\bm{\alpha}_{\mathcal G}(\bm p) = \bm{\alpha}_{\mathcal G}(\bm q)$
    then $\bm p$ and $\bm q$ are related by a similarity transformation.
    \label{prop:inf_angle_rigidity}
\end{proposition}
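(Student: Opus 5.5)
The statement is the local uniqueness result for infinitesimally angle rigid frameworks, cited from \cite[Theorem 2]{Jing2019AUT}. My plan is the standard argument turning infinitesimal rigidity into local rigidity: a constant-rank argument on a quotient by the similarity group. I will not try to characterize the whole fiber $\bm{\alpha}_{\mathcal G}^{-1}(\bm{\alpha}_{\mathcal G}(\bm p))$ directly. Instead I will show that near $\bm p$ it coincides with the orbit of $\bm p$ under similarities.

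\textbf{Step 1: invariance.} Since $p_i\neq p_j$, the set of configurations with distinct points is open, and $\bm{\alpha}_{\mathcal G}$ is smooth there. Let $\mathcal{S}$ be the group of maps $x\mapsto sQx+t$ with $s>0$ and $Q\in\mathrm{O}(d)$. The function $\bm{\alpha}_{\mathcal G}$ is invariant under $\mathcal{S}$: every $\beta_{ij}$ is sent to $Q\beta_{ij}$, so every inner product $\alpha_{ijk}$ is unchanged.

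\textbf{Step 2: orbit dimension.} The orbit $\mathcal{O}_{\bm p}=\mathcal{S}\cdot\bm p$ is an immersed submanifold with tangent space $\mathcal{T}_a(\bm p)$. I would check that it has full dimension $d+d'+1$, which requires the points not to be all coincident (guaranteed by the standing assumption). For $d=3$ I should also check the collinear case: there the rotation about the common line acts trivially, so the orbit dimension drops. The definition stated as $\dim\mathcal T_a=d+d'+1$ implicitly excludes this case, and I will assume that, as the paper does.

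\textbf{Step 3: constant rank.} Infinitesimal angle rigidity gives $\operatorname{rank}A_{\mathcal G}(\bm p)=d|\mathcal V|-(d+d'+1)$. Rank is lower semicontinuous, so it is at least this value near $\bm p$. It is also at most this value wherever $\mathcal T_a(\bm q)\subseteq\operatorname{Null}A_{\mathcal G}(\bm q)$ holds with $\dim\mathcal T_a(\bm q)=d+d'+1$, which is true near $\bm p$. Hence the rank is constant on a neighborhood $\mathcal W$. By the constant rank theorem, the level set $\bm{\alpha}_{\mathcal G}^{-1}(\bm{\alpha}_{\mathcal G}(\bm p))\cap\mathcal W'$ is, on a smaller neighborhood $\mathcal W'$, an embedded submanifold of dimension $d+d'+1$.

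\textbf{Step 4: comparing the two manifolds.} This is the main obstacle. The local orbit piece $\mathcal{O}_{\bm p}\cap\mathcal W'$ lies inside the level set by Step 1 and has the same dimension. An open subset of an embedded manifold of equal dimension is open in it, so near $\bm p$ the connected component of the level-set chart through $\bm p$ equals a local piece of the orbit. The care needed is in the choice of neighborhood. In the constant-rank chart the level set through $\bm p$ is a single slice, so I will take $\mathcal U$ to be a chart domain in which that slice is connected. Then every $\bm q\in\mathcal U$ with $\bm{\alpha}_{\mathcal G}(\bm q)=\bm{\alpha}_{\mathcal G}(\bm p)$ lies on that slice, and the slice is contained in the orbit, because the slice is a connected manifold containing the open subset $\mathcal O_{\bm p}\cap\mathcal U$. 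Closedness of this subset within the slice follows from properness of the similarity action on configurations with distinct points, modulo translation and scale normalization. It follows that $\bm q=g\cdot\bm p$ for some similarity $g$, which proves the proposition.
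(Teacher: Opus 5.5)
Your proof is correct. The paper does not prove this statement itself: it imports it from \cite{Jing2019AUT}, which is a planar result, even though the proposition is used here for $d\in\{2,3\}$.

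Your argument follows the Asimow--Roth constant-rank route. Near $\bm p$, the fiber $\bm\alpha_{\mathcal G}^{-1}(\bm\alpha_{\mathcal G}(\bm p))$ is a single slice of dimension $d+d'+1$. The similarity orbit lies inside this slice and has the same dimension, so it must fill it. Compared with the citation, this gives a self-contained proof that works in any dimension. Its only geometric input is that the orbit through $\bm p$ has dimension $d+d'+1$.

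Two loose ends can be tightened.

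First, for $d=3$ you do not need to assume non-collinearity. If all points are collinear, then $\beta_{ik}=\pm\beta_{ij}$, so $\beta_{ik}^\top P_{ij}=0=\beta_{ij}^\top P_{ik}$ and every row $A_{ijk}$ vanishes.
\begin{itemize}
\item Under the rank form of the definition, this would require $0=3|\mathcal V|-7$, which is impossible.
\item Under the null-space form, $\operatorname{Null}A_{\mathcal G}(\bm p)=\mathbb{R}^{3|\mathcal V|}$. In the collinear case $\mathcal T_a(\bm p)$ consists of the velocities $v_i=a+t_i b$ and has dimension $6$. Equality forces $|\mathcal V|=2$, where the conclusion is trivial.
\end{itemize}
So non-collinearity follows from the hypothesis, or the case is vacuous.

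Second, the similarity group acting on configurations with $p_1\neq p_2$ is proper directly, with no normalization needed. Suppose $g_n\cdot\bm x_n\to\bm y$ with $\bm x_n\to\bm x$. Then the scale factors of $g_n$ converge to $\|y_2-y_1\|/\|x_2-x_1\|>0$, the orthogonal parts lie in the compact group $\mathrm{O}(d)$, and the translations are therefore bounded.

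Alternatively, you can avoid properness and connectedness altogether. Call the slice $\Sigma$. The local orbit piece $\{g\cdot\bm p : g \text{ near the identity}\}$ is an embedded submanifold of $\Sigma$ of the same dimension, so it is open in $\Sigma$. Hence it equals $W\cap\Sigma$ for some open $W$. Shrinking $\mathcal U$ to lie in $W$ finishes the argument.
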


\section{Relation between Angle and Bearing Rigidity}
\label{sec:bearing_vs_angle}

In this section, we present the formal relationship between bearing rigidity and angle rigidity for arbitrary directed graphs in $\mathrm{SE}(d)$.

\begin{definition}
    A framework $(\mathcal G, \bm p)$ in $\mathbb{R}^{d|\mathcal{V}|}$ is said to be degenerate if there exists $i \in \mathcal{V}$, such that $|\mathcal{O}_i| \geq d$ and $\operatorname{dim}\mathcal{B}_i \leq d-1$, where $\mathcal{B}_i \coloneqq \operatorname{span} \{\beta_{ij} : j \in \mathcal{O}_i\} \subseteq \mathbb{R}^d$.
    \label{def:degenerate}
\end{definition}

Definition \ref{def:degenerate} characterizes a framework as degenerate if it includes at least one vertex that has enough outgoing bearings to span $\mathbb{R}^d$, yet those bearings are contained in a proper subspace.
Fig. \ref{fig:degeneracy} illustrates this condition.
It can be checked that the frameworks in Figs. \ref{fig:degenerate_ex} and \ref{fig:degenerate_ex_r3} are infinitesimally bearing rigid but not infinitesimally angle rigid.
On the other hand, the frameworks in Figs. \ref{fig:non_degenerate_ex} and \ref{fig:non_degenerate_ex_r3} are both infinitesimally bearing rigid and infinitesimally angle rigid.
Degeneracy is relevant to our purposes since, for degenerate frameworks, the equivalence between bearing rigidity and angle rigidity presented in Theorem \ref{thm:ibr_iar_equivalence_r3} does not hold. 
However, as shown in Proposition \ref{prop:generic_nondegenerate}, the non-degeneracy condition holds for almost all configurations $\bm p \in \mathbb{R}^{d|\mathcal{V}|}$ (Lebesgue-a.e.).

\begin{figure}[!tb]
    \centering
    \begin{subfigure}{0.49\columnwidth}
        \centering
        \includegraphics[scale=0.9]{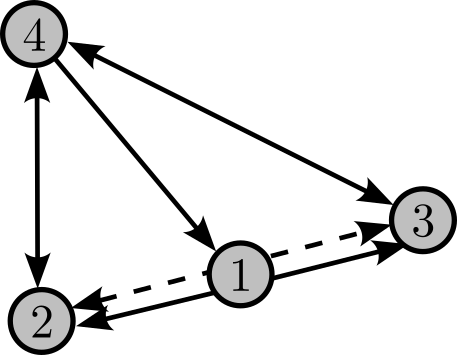}
        \caption{Degenerate.}
        \label{fig:degenerate_ex}
    \end{subfigure}
    \begin{subfigure}{0.49\columnwidth}
        \centering
        \includegraphics[scale=0.9]{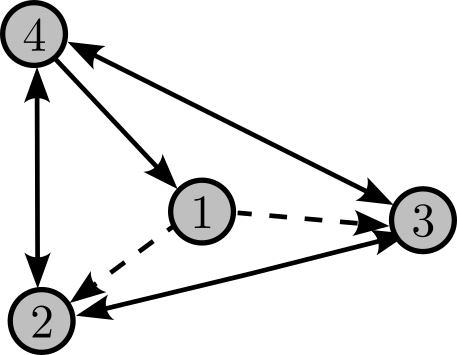}
        \caption{Non-degenerate.}
        \label{fig:non_degenerate_ex}
    \end{subfigure}
    \par\medskip
    \begin{subfigure}{0.49\columnwidth}
        \centering
        \includegraphics[scale=0.9]{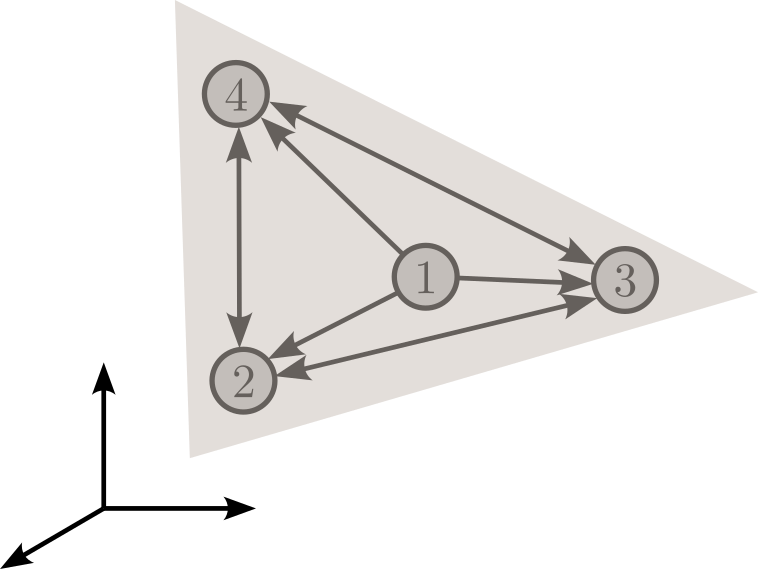}
        \caption{Degenerate.}
        \label{fig:degenerate_ex_r3}
    \end{subfigure}
    \begin{subfigure}{0.49\columnwidth}
        \centering
        \includegraphics[scale=0.9]{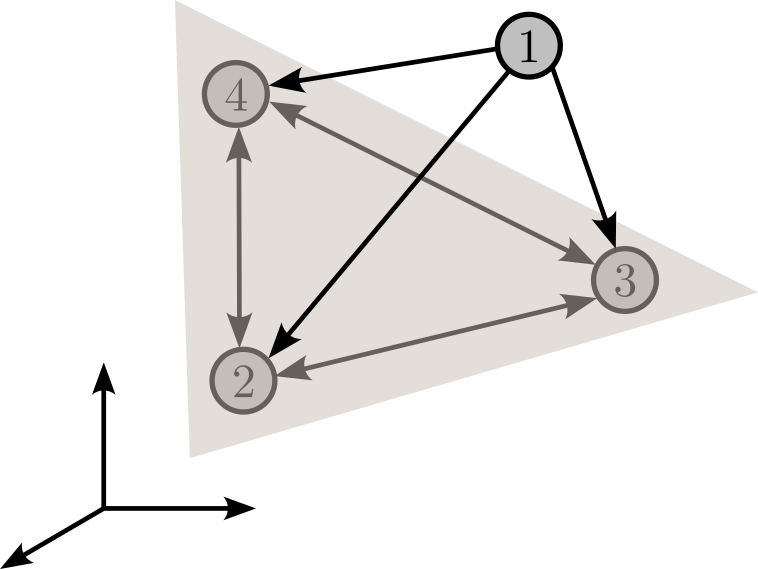}
        \caption{Non-degenerate}
        \label{fig:non_degenerate_ex_r3}
    \end{subfigure}
    \caption{Configuration degeneracy: (a) and (c) degenerate configurations in $\mathbb{R}^2$ and $\mathbb{R}^3$ due to colinearity and coplanarity between $\{1, 2, 3\}$ and $\{1, 2, 3, 4\}$, respectively; (b) and (d) breaking colinearity (coplanarity) removes degeneracy.}
    \label{fig:degeneracy}
\end{figure}

\begin{proposition}
\label{prop:generic_nondegenerate}
Let $\mathcal G=(\mathcal{V},\mathcal E)$ be a directed graph and let $d\ge 2$.
Consider the set of \emph{degenerate} configurations 
\begin{equation*}
    \mathcal{D}(\mathcal{G}) \coloneqq \big\{\bm p \in \mathbb{R}^{d|\mathcal{V}|} : (\mathcal G, \bm p) \ \text{is degenerate}\big\}.
\end{equation*}
Then $\mathcal{D}$ has Lebesgue measure zero.
Consequently, non-degenerate configurations are generic in both $\mathbb{R}^2$ and $\mathbb{R}^3$.
\end{proposition}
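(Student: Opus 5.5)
The plan is to write $\mathcal D(\mathcal G)$ as a finite union of zero sets of nontrivial polynomials and then use the fact that the zero set of a nonzero polynomial on $\mathbb{R}^{d|\mathcal V|}$ has Lebesgue measure zero. Only vertices $i$ with $|\mathcal O_i|\ge d$ can cause degeneracy. For such a vertex, $\dim\mathcal B_i\le d-1$ holds iff every $d$-subset $\{j_1,\dots,j_d\}\subseteq\mathcal O_i$ gives linearly dependent bearings. Since $\beta_{ij}=p_{ij}/d_{ij}$ with $d_{ij}>0$, this is equivalent to dependence of $p_{ij_1},\dots,p_{ij_d}$. So I would fix, for each such $i$, one $d$-subset $J_i=\{j_1,\dots,j_d\}\subseteq\mathcal O_i$. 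Any $d$-subset works, because we only need an upper bound on $\mathcal D$. This gives
\begin{equation*}
    \mathcal D(\mathcal G)\subseteq \bigcup_{i:\,|\mathcal O_i|\ge d} Z_i,\qquad Z_i\coloneqq\big\{\bm p : f_i(\bm p)=0\big\},
\end{equation*}
where $f_i(\bm p)\coloneqq\det\,[\,p_{j_1}-p_i,\ \dots,\ p_{j_d}-p_i\,]$. Configurations with coincident positions can be handled in the same way: they lie in the measure-zero union of the subspaces $\{p_i=p_j\}$, so they can be added or ignored without affecting the claim.

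Each $f_i$ is a polynomial in the coordinates of $\bm p$. To show it is not identically zero, it suffices to exhibit a single configuration where it is nonzero. Take $p_i=0$ and $p_{j_m}=e_m$ for $m=1,\dots,d$, with the remaining vertices placed arbitrarily. The $j_m$ are distinct and different from $i$, since $\mathcal O_i$ is a set of neighbors and self-loops do not produce bearings. Then $f_i=\det\bm I_d=1\ne0$.

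Next I would invoke the standard lemma that a nonzero real polynomial vanishes only on a Lebesgue-null set. One proof is by induction on the number of variables using Fubini: write the polynomial as $\sum_k c_k(x')x_1^k$; for $x'$ outside the null zero set of some nonzero $c_k$, the slice in $x_1$ is finite. Each $Z_i$ is therefore null, the union is finite, and so $\mathcal D(\mathcal G)$ is a subset of a null set. Lebesgue measure is complete, so $\mathcal D(\mathcal G)$ is null; it is in fact closed, hence measurable anyway. Genericity in $\mathbb R^2$ and $\mathbb R^3$ then follows by taking $d=2,3$: the complement of $\mathcal D(\mathcal G)$ has full measure, and it is open and dense because it contains the complement of a proper algebraic set.

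There is essentially no hard step here. The only point needing care is the reduction from ``$\dim\mathcal B_i\le d-1$'' to the vanishing of a single determinant. The reduction goes in the direction that makes $\mathcal D$ smaller than the union of the $Z_i$, which is all we need, and the explicit witness configuration then rules out identical vanishing.
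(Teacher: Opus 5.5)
Your proof is correct and follows the paper's argument. In both, degeneracy at a vertex $i$ with $|\mathcal O_i|\ge d$ becomes the vanishing of $d\times d$ determinants of the relative positions $p_{ij}$, and the witness $p_{ij_k}=e_k$ shows each determinant is a nonzero polynomial. The conclusion then comes from the null zero set of a nonzero polynomial and a finite union. The only differences are cosmetic: you bound $\mathcal D(\mathcal G)$ by the zero set of one chosen determinant per vertex, while the paper writes $\mathcal D$ exactly as $\bigcup_i\bigcap_J f_{i,J}^{-1}(0)$ and implicitly uses the same containment; your treatment of coincident positions and your sketch of the zero-set lemma are extra details the paper omits.
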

\begin{proof}
Fix $i\in\mathcal{V}$ and assume $|\mathcal{O}_i|\ge d$.
From Definition \ref{def:degenerate} and since the linear span does not change under nonzero scalar normalization, it follows that $\mathcal{B}_i = \mathrm{span}\{p_{ij}:j\in\mathcal{O}_i\}$.
Therefore, $\dim \mathcal B_i \le d-1$ is equivalent to saying that the family $\{p_{ij}\}_{j\in\mathcal{O}_i}$ is linearly dependent.
Equivalently, for every $d$-tuple $J=\{j_1,\ldots,j_d\}\subseteq\mathcal{O}_i$, it holds that
\[
\det M_{i, J}(\bm p) = 0, \quad M_{i,J}(\bm p) \coloneqq \big[p_{ij_1} \ \cdots \ p_{ij_d}\big]\in\mathbb{R}^{d\times d}.
\]
Let $f_{i, J}(\bm p) \coloneqq \det M_{i, J}(\bm p)$. It is clear that $f_{i, J}$ is analytic since it is a polynomial in the coordinates of $\bm p$.
Also, $f_{i, J}$ is not identically zero since $f_{i, J}(\bar{\bm p}) = \det(\bm{I}_d)\neq 0$ whenever $\bar{\bm p}$ satisfies $\bar{p}_{ij_k} = e_k$, $k=1,\dots,d$.
It is well-known that the preimage of $0$ under a nonzero analytic function has Lebesgue measure zero.
Finally, by Definition~\ref{def:degenerate},
\[
\mathcal D
 = 
\bigcup_{\substack{i\in\mathcal{V}\\ |\mathcal{O}_i|\ge d}}
\Big\{\bm p: \dim\mathcal B_i\le d-1\Big\}
=
\bigcup_{\substack{i\in\mathcal{V}\\ |\mathcal{O}_i|\ge d}}
\ \ \bigcap_{\substack{J\subseteq\mathcal{O}_i\\ |J|=d}}
f^{-1}_{i,J}(0).
\]
A finite union of measure-zero sets has measure zero, and therefore
$\mathcal D$ has Lebesgue measure zero.
\end{proof}

Now we present the main result of Section \ref{sec:bearing_vs_angle}.
This result shows that angle-based multi-robot coordination is more advantageous than bearing-based coordination, because the set of IAR frameworks strictly contains the set of IBR frameworks.
In practice, using angles allows some robots to dedicate their sensors to useful tasks instead of engaging in mutual observation at all times.

\begin{theorem}
    Let $d \in \{2,3\}$ and let $(\mathcal G, (\bm p, \bm R))$ be a framework in $\mathrm{SE}(d)$ such that $(\mathcal G, \bm p)$ is non-degenerate.
    Then, $(\mathcal G, (\bm p, \bm R))$ is infinitesimally bearing rigid if and only if $(\mathcal G, \bm p)$ is infinitesimally angle rigid and $\operatorname{dim} \mathcal{B}_i \geq d-1$, for all $i \in \mathcal{V}$.
    \label{thm:ibr_iar_equivalence_r3}
\end{theorem}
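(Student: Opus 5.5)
The plan is to compare the two null spaces directly and show that, under the stated hypotheses, projecting onto the velocity component gives a linear isomorphism $\operatorname{Null} B_{\mathcal G}(\bm p,\bm R)\to\operatorname{Null} A_{\mathcal G}(\bm p)$ that maps $\mathcal T_b$ onto $\mathcal T_a$. Since both trivial spaces have dimension $d+d'+1$, the equivalence between IBR and IAR then follows by counting dimensions. The hypothesis $\dim\mathcal B_i\ge d-1$ is handled separately: we show it is necessary for IBR, and that it is exactly what makes the projection injective. Throughout we assume $|\mathcal V|\ge 2$.

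\emph{Reformulation.} Write $\dot\beta_{ij}\coloneqq P_{ij}(v_j-v_i)/d_{ij}$ for the world-frame bearing rate. By \eqref{eq:bearing_rigidity_perturbation}, and since $R_i^\top$ is invertible, $[\bm v;\bm\omega]\in\operatorname{Null}B_{\mathcal G}$ if and only if $\dot\beta_{ij}=S(\omega_i)\beta_{ij}$ for every $(i,j)\in\mathcal E$. In words, the bearing rates at each vertex $i$ come from a common infinitesimal rotation. By \eqref{eq:angle_rigidity_perturbation}, $\bm v\in\operatorname{Null}A_{\mathcal G}$ if and only if $\beta_{ik}^\top\dot\beta_{ij}+\beta_{ij}^\top\dot\beta_{ik}=0$ for all $j,k\in\mathcal O_i$. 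The diagonal case $j=k$ holds automatically because $\dot\beta_{ij}\perp\beta_{ij}$. So angle-null means that the Gram matrix of $\{\beta_{ij}\}_{j\in\mathcal O_i}$ is infinitesimally preserved. The inclusion $\pi(\operatorname{Null}B)\subseteq\operatorname{Null}A$ is then immediate, because $S(\omega_i)$ is skew.

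\emph{Key lemma (surjectivity).} Fix $i$ with $\bm v\in\operatorname{Null}A$. We must produce a skew $W_i=S(\omega_i)$ with $\dot\beta_{ij}=W_i\beta_{ij}$ for all $j\in\mathcal O_i$. The argument has three steps.
\begin{enumerate}
\item Define $L$ on $\mathcal B_i$ by $L\beta_{ij}=\dot\beta_{ij}$, and check it is well defined. If $\sum_j c_j\beta_{ij}=0$, then for every $k$,
\[
\big\langle \textstyle\sum_j c_j\dot\beta_{ij},\beta_{ik}\big\rangle=-\big\langle\textstyle\sum_jc_j\beta_{ij},\dot\beta_{ik}\big\rangle=0 .
\]
Hence $\sum_jc_j\dot\beta_{ij}\perp\mathcal B_i$. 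Non-degeneracy handles the two possible cases.
\begin{itemize}
\item If $|\mathcal O_i|\ge d$, then $\mathcal B_i=\mathbb R^d$, so this vector vanishes.
\item If $|\mathcal O_i|\le d-1$, then $\dim\mathcal B_i\ge d-1$ forces the bearings to be linearly independent, so no nontrivial relation exists.
\end{itemize}
\item Observe that $L:\mathcal B_i\to\mathbb R^d$ satisfies $\langle Lx,y\rangle+\langle x,Ly\rangle=0$ on $\mathcal B_i$.
\item Extend $L$ to a skew map on $\mathbb R^d$. Split $L=L_\parallel+L_\perp$, with $L_\parallel$ taking values in $\mathcal B_i$ and $L_\perp$ taking values in $\mathcal B_i^\perp$. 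Then $L_\parallel$ is skew on $\mathcal B_i$, and $W_i\coloneqq L_\parallel\oplus(L_\perp-L_\perp^\top)$ is skew on $\mathbb R^d$ and restricts to $L$. Set $\omega_i=S^{-1}(W_i)$.
\end{enumerate}
I expect this step to be the main obstacle, specifically checking consistency when the bearings are dependent; non-degeneracy is used precisely there.

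\emph{Injectivity and necessity.} If $[\bm 0;\bm\omega]\in\operatorname{Null}B$, then $S(\omega_i)\beta_{ij}=0$ for all $j\in\mathcal O_i$.
\begin{itemize}
\item When $\dim\mathcal B_i\ge d-1$, this forces $\omega_i=0$. For $d=2$, one nonzero bearing suffices. For $d=3$, two independent vectors cannot both be parallel to the rotation axis. Thus $\pi$ is injective.
\item Conversely, if $\dim\mathcal B_i<d-1$ for some $i$, choose $\omega_i\neq0$ annihilating $\mathcal B_i$ (any $\omega_i$ if $\mathcal O_i=\emptyset$ in $d=2$; the axis through $\mathcal B_i$ in $d=3$), and set every other component to zero. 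This gives a null vector with $\bm v=0$ and nonuniform $\omega$, which is not in $\mathcal T_b$. Hence the framework is not IBR, which proves necessity of the extra condition.
\end{itemize}

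\emph{Conclusion.} Under the condition $\dim\mathcal B_i\ge d-1$ for all $i$, $\pi$ is an isomorphism $\operatorname{Null}B_{\mathcal G}\to\operatorname{Null}A_{\mathcal G}$. It maps $\mathcal T_b$ injectively into $\mathcal T_a$ by the definitions, so $\pi(\mathcal T_b)=\mathcal T_a$ because both have dimension $d+d'+1$. Therefore $\operatorname{Null}B=\mathcal T_b$ if and only if $\operatorname{Null}A=\mathcal T_a$, which is the claimed equivalence.
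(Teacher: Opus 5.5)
Your argument is correct. Its skeleton matches the paper's in three places. First, $\dim\mathcal B_i\ge d-1$ is shown to be necessary by rotating one robot about its (at most one-dimensional) bearing span. Second, a bearing-null $[\bar{\bm v};\bar{\bm\omega}]$ is projected to an angle-null $\bar{\bm v}$, and $\dim\mathcal B_i\ge d-1$ is used to show that $\bar{\bm v}\in\mathcal T_a$ would force $\bar\omega_i=\omega$; this is your injectivity. Third, an angle-null $\bar{\bm v}\notin\mathcal T_a$ is lifted to a bearing-null vector; this is your surjectivity.

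The difference lies in the lifting step. The paper works only in $d=3$. It solves $S(\beta_{i\nu})\bar\omega_i=-P_{i\nu}(\bar v_\nu-\bar v_i)/d_{i\nu}$ for two independent bearings $\nu\in\{j^\ast,k^\ast\}$ using the explicit solvability criterion of Lemma~\ref{lem:linear_system_2}, which it verifies with the Binet--Cauchy identity and $A_{ij^\ast k^\ast}\bar{\bm v}=0$. It then propagates $\bar\omega_i$ to the remaining neighbors by testing against a basis $\{\beta_{ij^\ast},\beta_{ik^\ast},\beta_{il^\ast}\}$ supplied by non-degeneracy, and leaves $d=2$ as analogous.

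You instead observe that $\beta_{ij}\mapsto\dot\beta_{ij}$ is a well-defined, infinitesimally orthogonal linear map on $\mathcal B_i$, and extend it to a skew map of $\mathbb R^d$. This has several advantages:
\begin{itemize}
\item it is coordinate-free and covers $d=2$ and $d=3$ at once;
\item it isolates exactly where non-degeneracy enters, namely consistency of $L$ on dependent bearings;
\item it yields the sharper statement $\operatorname{Null}B_{\mathcal G}\cong\operatorname{Null}A_{\mathcal G}$, i.e.\ $\operatorname{rank}B_{\mathcal G}=\operatorname{rank}A_{\mathcal G}+d'|\mathcal V|$, under the hypotheses.
\end{itemize}
What the paper's route buys in exchange is an explicit cross-product test for the existence of $\bar\omega_i$.

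One minor simplification: $\pi(\mathcal T_b)=\mathcal T_a$ follows directly from the two parametrizations, so the dimension count is unnecessary. This is just as well, since $\dim\mathcal T_a=d+d'+1$ fails for collinear configurations in $\mathbb R^3$, although your hypothesis rules those out.
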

\begin{proof}
    For convenience, we will focus on the case $d=3$; the case $d=2$ follows analogously.

    \emph{Sufficiency:} We proceed by the contrapositive, which leads us to two cases.
    
    First, assume that $\operatorname{dim} \mathcal{B}_{i^\ast} \leq d - 2 = 1$ for some $i^\ast \in \mathcal{V}$.
    On one hand, if $\mathcal{B}_{i^\ast} = \{\bm 0_d\}$ then $(\mathcal G, (\bm p, \bm R))$ cannot be IBR, because $\bm{\beta}_{\mathcal G}$ would be invariant under changes in $R_{i^\ast}$.
    On the other hand, consider $\operatorname{dim}(\mathcal{B}_{i^{\ast}}) = 1$. Define $\bar{\bm \omega} = [\bar{\omega}_i]_{i \in \mathcal{V}} \in \mathbb{R}^{3|\mathcal{V}|}$  such that $\bar{\omega}_i = 0$ for all $i \in \mathcal{V} \setminus \{i^{\ast}\}$ and $\bar{\omega}_{i^{\ast}} = \beta_{i^\ast j^\ast}$ for some $j^{\ast} \in \mathcal{O}_{i^{\ast}}$. It follows that $S(\bar{\omega}_{i^{\ast}}) \beta_{i^{\ast}j} = 0$ for all $j \in \mathcal{O}_{i^{\ast}}$.
    Let $\bar{\bm v} = 0 \in \mathbb{R}^{3|\mathcal{V}|}$.
    Then $[\bar{\bm v}; \bar{\bm \omega}] \notin \mathcal{T}_b(\bm p, \bm R)$ and $B_{\mathcal G}(\bm p, \bm R) [\bar{\bm v}; \bar{\bm \omega}] = 0$ (see \eqref{eq:bearing_rigidity_perturbation}); hence, the framework is not IBR.
    
    Second, consider $(\mathcal G, \bm p)$ that is not IAR for which $\operatorname{dim}(\mathcal{B}_i) \geq d - 1 = 2$ for each $i \in \mathcal{V}$.
    The first condition means that we can pick $\bar{\bm v} \notin \mathcal{T}_a(\bm p)$ such that $A_{\mathcal G}(\bm p) \bar{\bm v} = 0$.
    Our goal is to show that, irrespective of the choice of $\bm R \in \mathrm{SE}(d)^{|\mathcal{V}|}$, the framework $(\mathcal{G}, (\bm p, \bm R))$ is not IBR.
    To do that, we will construct $\bar{\bm \omega}$ such that $[\bar{\bm v}; \bar{\bm \omega}] \notin \mathcal{T}_b(\bm p, \bm R)$ and $B_{\mathcal G}(\bm p, \bm R)[\bar{\bm v}; \bar{\bm \omega}] = 0$.
    The latter equates to (see \eqref{eq:bearing_rigidity_perturbation})
    \begin{equation}
        S(\beta_{i\nu}) \bar{\omega}_i = - \frac{P_{i\nu}}{d_{i\nu}} (\bar{v}_{\nu} - \bar{v}_i) \quad \text{for all} \quad (i, \nu) \in \mathcal{E}.
        \label{eq:bearing_not_perturbed_ij}
    \end{equation}   
    Hence, for each $i \in \mathcal{V}$ pick $j^{\ast}, k^{\ast} \in \mathcal{O}_i$ such that $\beta_{ij^{\ast}}$ and $\beta_{ik^{\ast}}$ are linearly independent.
    We now prove that there exists $\bar{\omega}_i \in \mathbb{R}^3$ such that \eqref{eq:bearing_not_perturbed_ij} holds for $\nu = j^\ast$ and for $\nu = k^\ast$ which, due to Lemma \ref{lem:linear_system_2} (see Appendix \ref{apx:lemmas}), is equivalent to
    \begin{equation}
    \begin{split}
        \biggr(& S(\beta_{ij^{\ast}}) \frac{P_{ij^{\ast}}}{d_{ij^{\ast}}} (\bar{v}_{j^{\ast}} - \bar{v}_i) - \\  &S(\beta_{ik^{\ast}}) \frac{P_{ik^{\ast}}}{d_{ik^{\ast}}} (\bar{v}_{k^{\ast}} - \bar{v}_i)\biggr)^\top
        (S(\beta_{ij^{\ast}}) \beta_{ik^{\ast}}) = 0.
    \end{split}
    \label{eq:bearing_not_perturbed_ij_equiv}
    \end{equation}
    It can be shown that indeed \eqref{eq:bearing_not_perturbed_ij_equiv} holds by using Binet-Cauchy's identity\footnote{Binet-Cauchy: $(S(x) y)^\top (S(v) w) = (x^\top v)(y^\top w) - (x^\top w)(y^\top v)$ for $x, y, v, w \in \mathbb{R}^3$.}, and the hypothesis $A_{ij^{\ast}k^{\ast}} \bar{\bm{v}} = 0$ (see \eqref{eq:angle_rigidity_perturbation}).
    
    Now, we show that $\bar{\omega}_i$ in fact satisfies \eqref{eq:bearing_not_perturbed_ij} for all $(i, l)$ such that $l \in \mathcal{O}_i$.
    From the non-degeneracy assumption, if $|\mathcal{O}_i| \geq 3$ there exists $l^{\ast} \in \mathcal{O}_i$ such that $\{\beta_{ij^{\ast}}, \beta_{ik^{\ast}}, \beta_{il^{\ast}}\}$ is linearly independent.
    Also,
    \begin{equation*}
        \!\beta_{i \nu}^\top \left(\frac{P_{il^{\ast}}}{d_{il^{\ast}}} (\bar{v}_{l^{\ast}}\!-\!\bar{v}_i) + S(\beta_{il^{\ast}}) \bar{\omega}_i \right)\!=\!0, \ \ \nu\!\in\!\{j^\ast, k^\ast, l^\ast\}.
    \end{equation*}
    When $\nu=l^\ast$, this holds trivially; when $\nu \in \{j^\ast, k^\ast\}$, this follows from $A_{ij^{\ast}l^{\ast}} \bar{\bm{v}} = A_{ik^{\ast}l^{\ast}} \bar{\bm{v}} = 0$ (see \eqref{eq:angle_rigidity_perturbation}) and \eqref{eq:bearing_not_perturbed_ij}. 
    For the remaining $m \in \mathcal{O}_i \setminus \{j^{\ast}, k^{\ast}, l^{\ast}\}$, repeat the process. 
    Then, \eqref{eq:bearing_not_perturbed_ij} follows for all $i \in \mathcal{V}$.
    Finally, since $\bar{\bm v} \notin \mathcal{T}_a$, it follows that $[\bar{\bm v}; \bar{\bm \omega}] \notin \mathcal{T}_b$.

    \emph{Necessity:} we proceed with the following strategy. 
    By assuming that $(\mathcal G, (\bm p, \bm R))$ is not IBR and that $\operatorname{dim}(\mathcal{B}_i) \geq d-1 = 2$ for all $i \in \mathcal{V}$, we show that $(\mathcal G, \bm p)$ is not IAR.
    Consider $[\bar{\bm v}; \bar{\bm \omega}] \notin \mathcal{T}_b(\bm p, \bm R)$ such that $B_{\mathcal G}(\bm p, \bm R) [\bar{\bm v}; \bar{\bm \omega}] = 0$.
    Thus, \eqref{eq:bearing_not_perturbed_ij} holds.
    Also, for $(i, j, k) \in \mathcal{A}$,
    \begin{align*}
        A_{ijk} (\bm p) \bar{\bm v} &= \beta_{ik}^\top \frac{P_{ij}}{d_{ij}} (\bar{v}_j - \bar{v}_i) + \beta_{ij}^\top \frac{P_{ik}}{d_{ik}} (\bar{v}_k - \bar{v}_i)  \\
        &=  - \left( \beta_{ik}^\top S(\beta_{ij}) + \beta_{ij}^\top S(\beta_{ik}) \right) \bar{\omega}_i  = 0.
    \end{align*}
    We conclude that $A_{\mathcal G}(\bm p) \bar{\bm v} = 0$.
    Finally, we show that $\bar{\bm v} \notin \mathcal{T}_a(\bm p)$, for if $\bar{\bm v} \in \mathcal{T}_a(\bm p)$, then from \eqref{eq:angle_inf_trivial} there would exist $\omega \in \mathbb{R}^3$ such that $\bar{v}_j - \bar{v}_i = (\kappa \bm{I}_d + S(\omega))p_{ij}$.
    Then, noting that $P_{ij} S(p_{ij}) = S(p_{ij})$ yields
    \begin{equation}
        P_{ij} (\bar{v}_j - \bar{v}_i) = -S(p_{ij}) \omega, \quad \text{for all} \quad (i, j) \in \mathcal{E}.
        \label{eq:omega_1}
    \end{equation}
    Also, since $B_{\mathcal G}(\bm p, \bm R)[\bar{\bm v}; \bar{\bm \omega}] = 0$, it follows from \eqref{eq:bearing_rigidity_perturbation} that
    \begin{equation}
        P_{ij} (\bar{v}_j - \bar{v}_i) = -S(p_{ij}) \bar{\omega}_i, \quad \text{for all} \quad (i, j) \in \mathcal{E}.
        \label{eq:omega_2}
    \end{equation}
    For all $i \in \mathcal{V}$, \eqref{eq:omega_1} and \eqref{eq:omega_2} imply $S(p_{ij}) \bar{\omega}_i = S(p_{ij}) \omega$ for each $j \in \mathcal{O}_i$.
    Thus, since $\operatorname{dim}(\mathcal{B}_i) \geq 2$, it follows that $\bar{\omega}_i = \omega$.
    This implies $[\bar{\bm v}; \bar{\bm \omega}] \in \mathcal{T}_b$, a contradiction; hence, $\bar{\bm v} \notin \mathcal{T}_a$ and the claim follows.
\end{proof}

\section{Angle-based Estimation and Control}
\label{sec:control}

In this section, we first propose an angle-based localization algorithm and show its local exponential stability under switching topologies, provided that all visited graphs are infinitesimally rigid.
Then, to ensure the convergence of this estimator during a prescribed mission, we develop a decentralized gradient-based controller aimed at maintaining an appropriate level of angle rigidity despite changes in the sensing topology while driving the robots to fulfill the mission.

\subsection{The angle rigidity eigenvalue}
Although Definition \ref{def:inf_ang_rigid} characterizes infinitesimal rigidity, it fails to provide a measure of how rigid a framework is.
To address this, we define the \textit{symmetric angle rigidity matrix} as $\bm A_{\mathcal G}(\bm p) \coloneqq A_{\mathcal G}(\bm p)^\top A_{\mathcal G}(\bm p)$.
It follows that $(\mathcal G, \bm p)$ is IAR if and only if the \textit{angle rigidity eigenvalue} is strictly positive, that is,
\begin{equation}
    \lambda_{d + d' + 2}\left(\bm A_{\mathcal G}(\bm p)\right) > 0, 
    \label{eq:rigidity_eigenvalue}
\end{equation}
where $\lambda_k$ denotes the $k$-th smallest eigenvalue of $\bm A_{\mathcal G}(\bm p)$.
For convenience, we will omit the subscript and denote the rigidity eigenvalue simply by $\lambda$.
We propose using it as an indicator of a framework’s degree of rigidity, as has been done in the context of distance and bearing-based rigidity maintenance control \cite{Zelazo2015IJRR,Schiano2017ICRA}.
The rigidity eigenvalue not only characterizes infinitesimal rigidity, but is also crucial for evaluating both the convergence rate and the robustness to measurement noise of angle-based estimators.
See \cite{Liang2023CDC} for a comprehensive analysis of how the eigenvalues of matrices arising in angle-based localization influence the overall performance.
Therefore, the main objective of rigidity maintenance control is to keep this eigenvalue as large as possible.

\subsection{Angle-based Position Estimation}
\label{sec:anl}

To estimate agents' positions from angle measurements, we adapt the approach for formation control proposed in \cite{Jing2019AUT} to the context of network localization and study its convergence under time-varying topologies.
The goal for each agent is to distributedly obtain $\hat{p}_i \in \mathbb{R}^d$, an estimate of its position with respect to some (unknown) common reference frame.
It is clear that the configuration scale cannot be retrieved from angles alone; hence, we assume that one robot can measure the distance to a neighbor. 
Let $\iota, \kappa \in \mathcal{V}$ such that $d_{\iota \kappa}$ is measured by $\iota$.

Consider a finite family of graphs $\{\mathcal{G}_1, \ldots, \mathcal{G}_M\}$ and a fixed collision-free configuration $\bm p \in \mathbb{R}^{d |\mathcal{V}|}$.
Let $s : \mathbb{R}_{\geq 0} \to \{1, \ldots, M\}$ be a piecewise-constant switching signal.
We propose the following family of potential functions indexed by $m$,
\begin{equation}
    \!\mathcal{L}_m(\hat{\bm p})\!\coloneqq\!\tfrac{\gamma_a}{2} \left\| \bm{\alpha}_{\mathcal{G}_m}\!(\hat{\bm p})\!-\!\bm \alpha_{\mathcal{G}_m}\!(\hat{\bm p}) \right\|^2+\tfrac{\gamma_s}{4}\!\left(\!\hat{d}_{\iota \kappa}^{\, 2}\!-\!{d}_{\iota \kappa}^{\, 2}\!\right)^2,
    \label{eq:anl_cost}
\end{equation}
where $\gamma_a, \gamma_s > 0$ and $\hat{d}_{\iota \kappa} \coloneqq \| \hat{p}_{\kappa} - \hat{p}_{\iota} \|$ denotes the estimated distance.  
The proposed estimator is designed as the negative gradient of \eqref{eq:anl_cost}, $\dot{\hat{\bm p}}(t) = - \nabla_{\hat{\bm p}} \mathcal{L}_{s(t)}(\hat{\bm p}(t))$.
For robot $i$, and omitting the dependence on $m$,
\begin{equation}
    \begin{split}
        \dot{\hat{p}}_i & = \gamma_a \sum_{\{j, k\} \subseteq \mathcal{O}_i} (\hat{\alpha}_{ijk} - \alpha_{ijk}) \left(\hat{\eta}_{ijk} + \hat{\eta}_{ikj}\right) - \\
        & \gamma_a \sum_{\{i, k\} \subseteq \mathcal{O}_j} \left(\hat{\alpha}_{jik} - \alpha_{jik}\right) \hat{\eta}_{jik} - \\
        & \gamma_s (\delta_{i\iota} - \delta_{i\kappa})(\hat{d}_{\iota \kappa}^{\, 2}- d_{\iota \kappa}^{\, 2}) (\hat{p}_{\iota} - \hat{p}_{\kappa}), 
    \end{split}
    \label{eq:anl}
\end{equation}
where $\eta_{ijk} \coloneqq P_{ij} \beta_{ik} / d_{ij}$ and $\delta_{ij}$ is Kronecker's delta.
Here, $\hat{x}$ indicates that variable $x$ is computed using the estimated positions.
To compute \eqref{eq:anl} in a decentralized manner, robot $i$ requires: (a) for each $\{j, k\} \subseteq \mathcal{O}_i$, measure $\alpha_{ijk}$ and receive $\hat{p}_j$ and $\hat{p}_k$; (b) for each $j, k$ such that $\{i, k\} \subseteq \mathcal{O}_j$, receive $(\hat{\alpha}_{jik} - \alpha_{jik}) \hat{\eta}_{jik}$, which can be computed by $j$.

The following Proposition studies the convergence properties of \eqref{eq:anl}.
\begin{proposition}
    Assume that $\bm{p}$ is fixed and that $s(t)$ has finitely many discontinuities on every bounded interval.
    Moreover, assume that each $(\mathcal{G}_m, \bm p), m =1, \ldots, M$ is infinitesimally angle rigid.
    Then, under \eqref{eq:anl}, the desired set $\mathcal{S}(\bm p) \coloneqq \{\bm{q} : \bm{q} = [Qp_i + r]_{i \in \mathcal{V}}, \ (r, Q)\in\mathrm{SE}(d)\}$ is locally exponentially stable.
\end{proposition}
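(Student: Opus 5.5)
The plan is to build a single Lyapunov function that works for every mode $m$ at once. I will take the squared distance to the target set, $V(\hat{\bm p}) \coloneqq \operatorname{dist}(\hat{\bm p}, \mathcal{S}(\bm p))^2$, which does not depend on the graph, and show that $\dot V \le -c V$ in a neighborhood of $\mathcal{S}(\bm p)$ for every $m$, with the same $c>0$ for all modes. Local exponential stability under arbitrary switching then follows, since only finitely many switches occur on bounded intervals and solutions are therefore well defined piecewise. Throughout I read the first term of \eqref{eq:anl_cost} as $\bm\alpha_{\mathcal G_m}(\hat{\bm p}) - \bm\alpha_{\mathcal G_m}(\bm p)$, i.e., estimated minus measured angles.

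\emph{Step 1 (minimizers and Hessian).} Each $\mathcal{L}_m \ge 0$ vanishes on $\mathcal{S}(\bm p)$. Angles are invariant under $\mathrm{SE}(d)$ and $\hat d_{\iota\kappa} = d_{\iota\kappa}$ there. At any $\bm q \in \mathcal{S}(\bm p)$ both residuals vanish, so the Hessian is
\[
H_m(\bm q) = \gamma_a A_{\mathcal G_m}(\bm q)^\top A_{\mathcal G_m}(\bm q) + \gamma_s\, g(\bm q) g(\bm q)^\top ,
\]
where $g$ is the gradient of $\tfrac12 \hat d_{\iota\kappa}^{\,2}$, namely $\pm \hat p_{\iota\kappa}$ in the $\iota,\kappa$ blocks and zero elsewhere.

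\emph{Step 2 (null space of the Hessian).} By infinitesimal angle rigidity, $\operatorname{Null} A_{\mathcal G_m}(\bm q) = \mathcal T_a(\bm q)$; this is preserved under rigid motions, because $A_{\mathcal G_m}(Q\bm p + r) = A_{\mathcal G_m}(\bm p)(\bm I \otimes Q^\top)$. Within $\mathcal T_a$, translations and rotations are annihilated by $g$. The scaling direction $v_i = p_i$ gives $g^\top \bm v = \pm d_{\iota\kappa}^2 \neq 0$. Hence $\operatorname{Null} H_m(\bm q)$ is exactly the span of infinitesimal rotations and translations, which is $T_{\bm q}\mathcal{S}(\bm p)$. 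So $H_m(\bm q)$ is positive definite on the normal space $N_{\bm q}\mathcal{S}(\bm p)$, with smallest eigenvalue $\mu_m > 0$. Conjugation invariance under $\mathrm{SE}(d)$ makes $\mu_m$ independent of $\bm q \in \mathcal{S}$, so $\mu \coloneqq \min_m \mu_m > 0$ works uniformly.

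\emph{Step 3 (decay of $V$).} $\mathcal{S}(\bm p)$ is a smooth embedded submanifold, an $\mathrm{SE}(d)$-orbit, and it is compact modulo translations, so it has a tubular neighborhood of uniform radius $\epsilon$. On that neighborhood the nearest-point projection $\pi$ is smooth, $\bm e \coloneqq \hat{\bm p} - \pi(\hat{\bm p}) \in N_{\pi(\hat{\bm p})}\mathcal{S}$, and $\nabla V = 2\bm e$. Taylor expansion of $\nabla \mathcal L_m$ about $\pi(\hat{\bm p})$, using $\nabla\mathcal L_m = 0$ on $\mathcal S$, gives
\[
\nabla \mathcal L_m(\hat{\bm p}) = H_m(\pi(\hat{\bm p}))\,\bm e + O(\|\bm e\|^2).
\]
The remainder constant is uniform by translation/rotation invariance together with the finitely many modes. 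Therefore
\[
\dot V = -2\langle \bm e, \nabla\mathcal L_{s(t)}\rangle \le -2\mu\|\bm e\|^2 + C\|\bm e\|^3 \le -\mu V
\]
whenever $\|\bm e\| \le \min\{\epsilon, \mu/C\}$. Sublevel sets of $V$ in this region are forward invariant for every mode, so across switches $V(t) \le e^{-\mu t}V(0)$. This gives exponential convergence of $\operatorname{dist}(\hat{\bm p}, \mathcal S)$, with stability following from the same bound.

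\emph{Main obstacle.} I expect the hard part to be the uniformity claims, since $\mathcal{S}(\bm p)$ is non-compact. Specifically, one must check that the tubular radius, the Hessian lower bound $\mu$, and the Taylor remainder constant $C$ are all independent of the base point and of the mode. I plan to handle this by noting that all the quantities involved are equivariant under $\mathrm{SE}(d)$: the cost $\mathcal L_m$ is invariant, and the distance to $\mathcal S$ is invariant. This reduces every bound to a neighborhood of the compact set $\{Q\bm p : Q \in \mathrm{SO}(d)\}$, and there continuity gives uniform constants.
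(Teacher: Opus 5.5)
Your proposal is correct and follows essentially the same route as the paper. Both use the mode-independent squared distance to $\mathcal S$ on a tubular neighborhood as a common Lyapunov function, expand $\nabla\mathcal L_m$ to first order about the projection onto $\mathcal S$, and show that the Hessian $\gamma_a\bm A_{\mathcal G_m}+\gamma_s\bm D_{\iota\kappa}$ is positive definite on $N_{\bm q}\mathcal S$, uniformly in $\bm q$ and $m$, because infinitesimal angle rigidity leaves only similarity motions in the kernel and the distance term removes scaling. Your explicit use of $\mathrm{SE}(d)$-equivariance to make the tubular radius, the Hessian bound and the Taylor remainder uniform is more careful than the paper's treatment. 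The only slip is a harmless missing factor of $2$ in your $\gamma_s\, g g^\top$ term, which the paper's $\bm D_{\iota\kappa}$ includes.
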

\begin{proof}
    Since $\bm p$ is fixed, we denote $\mathcal{S} = \mathcal{S}(\bm p)$.
    Now, $\mathcal{S}$ is smooth embedded submanifold of $\mathbb{R}^{d|\mathcal{V}|}$, hence it has a tubular neighborhood $\mathcal{U}$ for which there is a smooth orthogonal projection, see \cite{Lee2013ISM}.
    Indeed, let $\Pi :\mathcal{U} \to \mathcal{S}$ be such projection, then for each $\hat{\bm p} \in \mathcal{U}$, it holds that $\bm{\delta} \coloneqq \hat{\bm p} - \Pi(\hat{\bm p}) \in N_{\Pi(\hat{\bm p})} \mathcal{S}$.
    Define 
    \begin{equation*}
        \mathcal{W}(\hat{\bm p}) \coloneqq \frac{1}{2}\operatorname{dist}(\hat{\bm p}, \mathcal{S})^2 = \frac{1}{2}\|\bm{\delta}\|^2
    \end{equation*}
    as the Lyapunov function for the proposed estimator.
    Now, let $\hat{\bm q} \coloneqq \Pi(\hat{\bm p})$, it follows that $\dot{\hat{\bm q}} \in T_{\hat{\bm q}} \mathcal{S}$ and $\bm{\delta} \in N_{\hat{\bm q}} \mathcal{S}$. Thus,
    \begin{equation*}
        \dot{\mathcal{W}}(\hat{\bm p}) 
        = \bm{\delta}^\top \dot{\bm{\delta}} 
        = \bm{\delta}^\top \dot{\hat{\bm p}} = - \bm{\delta}^\top \nabla_{\hat{\bm p}} \mathcal{L}_m(\hat{\bm p}).
    \end{equation*}
    Now, we study the behavior of $\nabla_{\hat{\bm p}} \mathcal{L}_m$ near $\mathcal{S}$.
    Since $\nabla_{\hat{\bm p}} \mathcal{L}_m(\hat{\bm q}) = 0$, its Taylor expansion near $\hat{\bm{q}}$ is
    \begin{equation*}
        \nabla_{\hat{\bm p}} \mathcal{L}_m(\hat{\bm p}) = \nabla_{\hat{\bm p}} \mathcal{L}_m(\hat{\bm q} + \bm{\delta}) = \nabla^2_{\hat{\bm p}} \mathcal{L}_m(\hat{\bm q}) \bm{\delta} + \varepsilon_m(\hat{\bm q}, \bm \delta) \|\bm \delta\|,  
    \end{equation*}
    where $\|\varepsilon_m(\hat{\bm q}, \bm \delta)\| \to 0$ as $\bm \delta \to 0$.\\\\
    On the other hand, for each $\bm{q} \in \mathcal{S}$, the Hessian is
    \begin{equation}
        \nabla^2_{\hat{\bm p}} \mathcal{L}_m(\bm q) = \gamma_a \bm{A}_{\mathcal{G}_m}(\bm q) + \gamma_s \bm{D}_{\iota \kappa}(\bm q),
        \label{eq:hessian}
    \end{equation}
    where $\bm{A}_{\mathcal{G}_m}$ is the symmetric rigidity matrix and $\bm{D}_{\iota \kappa}\coloneqq 2 (e_{\iota}-e_{\kappa})( e_{\iota}-e_{\kappa})^\top \otimes q_{\iota \kappa} q_{\iota \kappa}^\top$, both positive semidefinite.
    From the infinitesimal rigidity assumption of $\mathcal{G}_m$ it follows that 
    $\operatorname{Null}\left(\bm{A}_{\mathcal{G}_m}(\bm q)\right) = T_{\bm q} \mathcal{S} + \operatorname{span}\{\bm{q}\}$, the last term corresponding to scaling motions, which are eliminated by including the term corresponding to $\bm{D}_{\iota \kappa}(\bm q)$.
    Indeed,
    $\operatorname{Null}\left(\bm{A}_{\mathcal{G}_m}\right) \cap \operatorname{Null}\left(\bm{D}_{\iota \kappa}\right) = T_{\bm q} \mathcal{S}$; hence, the Hessian \eqref{eq:hessian} is positive definite on $N_{\bm q} \mathcal{S}$.
    In particular, there is a lower bound $c > 0$ such that for all $m=1,\ldots,M$,
    \begin{equation}
       \bm{v}^\top \nabla^2_{\hat{\bm p}} \mathcal{L}_m(\bm q) \bm{v} \geq c \|\bm v\|^2 \quad \text{for all} \quad \bm{v} \in N_{\bm q} \mathcal{S}.
        \label{eq:hessian_pos_def}
    \end{equation}
    Note that \eqref{eq:hessian_pos_def} holds uniformly in $\bm q$, since the spectrum of both $\bm{A}_{\mathcal{G}_m}$ and $\bm{D}_{\iota \kappa}$ is invariant along $\bm{q} \in \mathcal{S}$.
    Finally, in a sufficiently small neighborhood,
    \begin{equation*}
    \begin{split}
        \dot{\mathcal{W}}(\hat{\bm p}) &= - \bm{\delta}^\top \nabla^2_{\hat{\bm p}} \mathcal{L}_m(\hat{\bm q}) \bm{\delta} - \bm{\delta}^\top \varepsilon_m(\hat{\bm q}, \bm \delta) \|\bm \delta\| \\
        & \leq \left(- c + \|\varepsilon_m(\hat{\bm q}, \bm \delta)\|\right) \|\bm \delta\|^2 \leq - \tilde{c} \|\bm{\delta}\|^2 = -2 \tilde{c} \mathcal{W}(\hat{\bm p})
        \end{split}
    \end{equation*}
    for some $\tilde{c} > 0$.
    Here, we used $-\bm{\delta}^\top \varepsilon_m(\hat{\bm q}, \bm \delta) \leq \|\bm \delta\| \|\varepsilon_m(\hat{\bm q}, \bm \delta)\|$.
    Since $\mathcal{W}$ is continuous and does not depend on $m$, one gets $\mathcal{W}(\hat{\bm p}(t)) \leq e^{-2 \tilde{c} (t-t_0)} \mathcal{W}(\hat{\bm p}(t_0))$,
    and the claim follows.
\end{proof}

\begin{remark}
The use of angle measurements effectively decouples the estimation of the agents’ positions from their orientations.
This is important because, in bearing-based localization, the position and orientation estimators are inherently coupled.
As a result, the propagation of noisy orientation errors not only complicates the analysis of the position estimator but may also compromise its convergence.
\end{remark}

\subsection{Decentralized Rigidity Maintenance Control}
\label{sec:arm}

\subsubsection{Robot model}
We consider a team of robots moving in $\mathbb{R}^d$ with single-integrator dynamics, 
\begin{equation*}
\dot p_i = R_i u^i_i, \qquad \dot R_i = R_i S(\omega^i_i), \qquad i \in \mathcal{V},
\end{equation*}
where $u^i_i$ and $\omega^i_i$ denote the control
inputs, corresponding to linear and angular velocity in the body-frame, respectively.
They are defined as
\begin{equation}
    \begin{split}
        u^i_i &\coloneqq \gamma_r u^i_{i, r} +\gamma_c u^i_{i, c} + \gamma_m u^i_{i, m}, \\ 
        \omega^i_i &\coloneqq \gamma'_r \omega^i_{i, r} + \gamma'_m \omega^i_{i, m}
    \end{split}
    \label{eq:control_action}
\end{equation}
where $\gamma_r,\gamma_c, \gamma_m, \gamma'_r, \gamma'_m > 0$ and each term corresponds to three objectives: rigidity maintenance (Section \ref{sec:rigidity_control}), collision avoidance (Section \ref{sec:collision_control}), and mission completion (Section \ref{sec:mission_control}).

In order to encode sensing constraints into the system's model, let the camera's optical axis be oriented towards the body-frame $x$-axis and represented by $R_i e_1$.
Also, define $\zeta_{ij} \coloneqq \beta_{ij}^\top R e_1 = (\beta^i_{ij})^\top e_1$.
Then, the sensing graph is determined by the edge set
\begin{equation}
    \mathcal{E}(\bm p, \bm R) = \{(i, j) \in \mathcal{V} \times \mathcal{V} : d_{ij} < \rho_r, \ \zeta_{ij} > \rho_f\},
    \label{eq:pose_dependence}
\end{equation}
where $\rho_r > 0$ and $\rho_f \coloneqq \cos(\phi/2)$ with $0 \leq \phi < 2\pi$ are fixed and characterize the cameras' range and field of view, respectively.
Note that, with this model, the angle set $\mathcal A$ also depends on the joint pose $(\bm p, \bm R)$.

\subsubsection{The weighted angle rigidity eigenvalue}

Note that changes in the sensing topology introduce discontinuities in the rigidity eigenvalue \eqref{eq:rigidity_eigenvalue}.
Therefore, due to \eqref{eq:pose_dependence}, $\lambda$ depends non-smoothly on $(\bm p, \bm R)$, which prevents its direct use in gradient-based controllers. To address this, we propose an approximation that \textit{smoothly} accounts for the sensing constraints.
We do this by assigning pose-dependent weights to each angle in $\mathcal{A}$.
These weights are incorporated into the scheme via the \textit{weighted} symmetric angle rigidity matrix 
\begin{equation}
    \widetilde{\bm A}(\bm p, \bm R) \coloneqq A_{\mathcal G}(\bm p)^\top W(\bm p, \bm R) A_{\mathcal G}(\bm p),
    \label{eq:symmetric_rigidity}
\end{equation}
where $\mathcal{G} = \mathcal{G}(\bm p, \bm R)$, and $W \in \mathbb{R}^{|\mathcal A| \times |\mathcal A|}$ is a diagonal matrix that collects all the weights.
The corresponding \textit{weighted} rigidity eigenvalue is denoted by $\widetilde{\lambda}(\bm p, \bm R)$.

Weights are defined as follows.
Let $w_{ijk} \in \mathbb{R}$ be the weight associated with $(i, j, k) \in \mathcal{A}$.
Similar to \cite[Section III.D]{Schiano2017ICRA}, we employ the family  $\sigma : \mathbb{R} \to [0, 1]$ of functions parameterized by $a, b \in \mathbb{R}$ defined as
\begin{equation}
    \sigma(x; a, b) = 
    \begin{cases}
        0, & x < a \\
        \frac{1}{2} \left(1 - \cos\left(\pi \frac{x - a}{b - a}\right)\right), & a \leq x \leq b \\
        1, & b < x 
    \end{cases}.
    \label{eq:sigma}
\end{equation}
Hence, $\sigma(x; a, b)$ transitions (with a continuous derivative) from $0$ to $1$ as $x$ varies from $a$ to $b$.
Let $\sigma_r(x) \coloneqq \sigma(x; a_r, b_r)$ with $0 \leq a_r < b_r \leq \rho_r$ and define the weight $w_{r_{ij}} \coloneqq 1 - \sigma_r(d_{ij})$ associated with edge $(i, j)$, which accounts for the limited sensing range.
It follows that $w_{r_{ij}} = 1$ when $d_{ij} < a_r$ and smoothly approaches $0$ as $d_{ij}$ increases to $b_r$.
Likewise, $\sigma_f(x) \coloneqq \sigma(x; a_f, b_f)$ with $\rho_f \leq a_f < b_f \leq 1$ and $w_{f_{ij}} \coloneqq \sigma_f(\zeta_{ij})$, which considers the limited field of view.
Hence, $w_{f_{ij}} = 1$ when $\zeta_{ij} > b_f$ and smoothly approaches $0$ as $\zeta_{ij}$ decreases to $a_f$.
Analogously, define $w_{r_{ik}}$ and $w_{f_{ik}}$ for edge $(i, k)$.
Finally, we set
\begin{equation}
    w_{ijk} = d_{ij} d_{ik} w_{r_{ij}} w_{f_{ij}} w_{r_{ik}} w_{f_{ik}}.
    \label{eq:weights}
\end{equation}
Thus, $w_{ijk} >0$ if and only if agents $j$ and $k$ are both captured by $i$'s camera, and $w_{ijk}$ smoothly approaches $0$ as any of them moves outside of it.
The reason to include the factor $d_{ij} d_{ik}$ is to make $\widetilde{\bm A}$ (and therefore $\widetilde{\lambda}$) scale invariant.
By checking \eqref{eq:angle_rigidity_perturbation}, it follows that the symmetric rigidity matrix $\bm A_{\mathcal G}$ (and therefore $\lambda$) depends inversely on the square of the configuration scale.
This implies that following the gradient of the rigidity eigenvalue drives the robots to collapse into a single point, which is undesirable.

\begin{proposition}
    If all weights are defined by \eqref{eq:weights}, then in the open set where no robot collisions occur, $\widetilde{\bm A}$ \eqref{eq:symmetric_rigidity} is continuously differentiable with respect to $(\bm p, \bm R)$.
    \label{prop:symmetric_rigidity}
\end{proposition}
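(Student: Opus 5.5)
The key difficulty is that $\mathcal A$ and the row set of $A_{\mathcal G}(\bm p)$ change with $(\bm p,\bm R)$ through \eqref{eq:pose_dependence}. So the first step is to remove this discontinuity by rewriting $\widetilde{\bm A}$ as a sum over the \emph{fixed} index set $\mathcal A_{\mathrm{all}}\coloneqq\{(i,j,k): i\in\mathcal V,\ j<k,\ j,k\in\mathcal V\setminus\{i\}\}$:
\begin{equation*}
\widetilde{\bm A}(\bm p,\bm R)=\sum_{(i,j,k)\in\mathcal A_{\mathrm{all}}} w_{ijk}(\bm p,\bm R)\,A_{ijk}(\bm p)^\top A_{ijk}(\bm p).
\end{equation*}
This identity holds because a triple $(i,j,k)$ outside $\mathcal A(\bm p,\bm R)$ has $(i,j)\notin\mathcal E$ or $(i,k)\notin\mathcal E$. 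In that case $d_{ij}\ge\rho_r\ge b_r$ (so $w_{r_{ij}}=0$) or $\zeta_{ij}\le\rho_f\le a_f$ (so $w_{f_{ij}}=0$), and similarly for $(i,k)$. Hence $w_{ijk}=0$ whenever the triple is not observed, and the extra terms contribute nothing. It therefore suffices to show that each summand $w_{ijk}A_{ijk}^\top A_{ijk}$ is $C^1$ on the collision-free open set.

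Second, I would check the rows $A_{ijk}(\bm p)$ from \eqref{eq:angle_rigidity_perturbation}. Their entries involve $\beta_{ij}=p_{ij}/d_{ij}$, $P_{ij}$ and $1/d_{ij}$. Since $d_{ij}=\|p_{ij}\|$ is smooth wherever $p_i\ne p_j$, these are smooth ($C^\infty$) functions of $\bm p$ on the collision-free set, so $A_{ijk}^\top A_{ijk}$ is smooth there. The same holds for $\zeta_{ij}=\beta_{ij}^\top R_ie_1$, which is smooth in $(\bm p,\bm R)$.

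Third, I would show the weights are $C^1$. The function $\sigma(\cdot;a,b)$ in \eqref{eq:sigma} is $C^1$ on $\mathbb R$. It is clearly smooth on each piece, and at the breakpoints the values match (0 at $a$, 1 at $b$). The derivatives also match: $\tfrac{\pi}{2(b-a)}\sin(\pi\tfrac{x-a}{b-a})$ vanishes at $x=a$ and $x=b$, agreeing with the zero derivative of the constant pieces. It is only $C^1$, not $C^2$, which is why the statement claims continuous differentiability only. Composing with the smooth maps $d_{ij}$ and $\zeta_{ij}$ gives $C^1$ factors $w_{r_{ij}}, w_{f_{ij}}$. The product \eqref{eq:weights} with the smooth factor $d_{ij}d_{ik}$ is then $C^1$.

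Finally, a finite sum of products of $C^1$ functions is $C^1$, which gives the claim. The main subtlety is the first step: the representation over $\mathcal A_{\mathrm{all}}$ relies on the parameter ordering $b_r\le\rho_r$ and $\rho_f\le a_f$, so that weights vanish \emph{before} an edge disappears. With the opposite ordering the jump in $\mathcal E$ would appear as a discontinuity, so I would state explicitly where these inequalities are used. I would also note that differentiability with respect to $\bm R$ is understood on the manifold $\mathrm{SO}(d)^{|\mathcal V|}$; this causes no difficulty because $\zeta_{ij}$ is the restriction of a smooth (linear in $R_i$) function on the ambient matrix space.
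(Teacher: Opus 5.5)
Your proof is correct and follows the same route as the paper. Both write $\widetilde{\bm A}=\sum w_{ijk}A_{ijk}^\top A_{ijk}$ and use that every factor is $C^1$ on the collision-free set while the weights vanish as triples leave $\mathcal A$; your fixed-index-set reformulation, with the explicit use of $b_r\le\rho_r$ and $\rho_f\le a_f$ and the check that $\sigma$ is $C^1$ at its breakpoints, makes rigorous the paper's one-line remark that $w_{ijk}$ ``smoothly approaches $0$''.
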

\begin{proof}
    First, $\widetilde{\bm A}(\bm p, \bm R) = \sum_{(i, j, k) \in \mathcal{A}} w_{ijk} A_{ijk}^\top A_{ijk}$.
    The claim follows since all quantities involved are continuously differentiable with respect to $(\bm p, \bm R)$ under the non-collision assumption, and from the fact that $w_{ijk}$ smoothly approaches $0$ as $(i, j, k)$ is removed from $\mathcal{A}$.
\end{proof}

\subsubsection{Rigidity control action}
\label{sec:rigidity_control}
As discussed, angle rigidity maintenance aims to prevent the rigidity eigenvalue from approaching zero.
To this end, we propose the minimization of the following potential
\begin{equation}
    \mathcal{R}(\bm p, \bm R) \coloneqq - \log\left(\widetilde{\lambda}(\bm p, \bm R)\right).
    \label{eq:arm_cost}
\end{equation}
It is important to emphasize that the objective of angle rigidity maintenance fundamentally differs from angle-based formation or shape control. The potential \eqref{eq:arm_cost} does not aim to enforce convergence to a prescribed set of angles or to maintain a fixed geometric shape. Instead, its sole purpose is to prevent the system from evolving toward configurations where the angle rigidity matrix loses rank.
As a result, the angle set (and their values) is not required to remain constant along the trajectory, nor is the formation constrained to move rigidly as a whole.

Now, we obtain expressions for the body-frame rigidity control actions for agent $i$.
Let $\widetilde{\bm \nu} = [\tilde{\nu}_i]_{i \in \mathcal{V}}$ be a unit-norm eigenvector associated with $\widetilde{\lambda}$.
First, we express the weighted rigidity eigenvalue as a sum over all angles,   
\begin{equation}
    \widetilde{\lambda}(\bm p, \bm R) = \widetilde{\bm \nu}^\top \widetilde{\bm A} \widetilde{\bm \nu} = \sum_{(i, j, k) \in \mathcal{A}} w_{ijk} \left(A_{ijk} \widetilde{\bm \nu}\right)^2,
    \label{eq:re}
\end{equation}
with $A_{ijk} \widetilde{\bm \nu}$ as in \eqref{eq:angle_rigidity_perturbation}.
The gradient of \eqref{eq:re} with respect to $(p_i, R_i)$ is
\begin{equation}
    \begin{split}
        \nabla_{p_i} \widetilde{\lambda} &= \sum_{(\iota, \kappa, \ell) \in \mathcal{A}} \biggr(\!\left(A_{\iota\kappa\ell} \widetilde{\bm \nu}\right)^2 \nabla_{p_i} w_{\iota\kappa\ell} \ + \\
        & \quad 2 w_{\iota\kappa\ell} \left(A_{\iota\kappa\ell} \widetilde{\bm \nu}\right) \tfrac{\partial A_{\iota\kappa\ell}}{\partial p_i} \widetilde{\bm \nu} \biggr), 
        \\
        \nabla_{R_i} \widetilde{\lambda} &= \sum_{(\iota, \kappa, \ell) \in \mathcal{A}} \left(A_{i\kappa \ell} \widetilde{\bm \nu}\right)^2 \nabla_{R_i} w_{i\kappa \ell}.
    \end{split}
    \label{eq:re_grad}
\end{equation}

\begin{remark}
To obtain \eqref{eq:re_grad}, we used $\frac{\partial \widetilde{\lambda}}{\partial \xi} = \widetilde{\bm \nu}^\top \frac{\partial \widetilde{\bm A}}{\partial \xi}\widetilde{\bm \nu}$, for each parameter $\xi \in \mathbb{R}$.
This expression assumes that the rigidity eigenvalue has multiplicity one since, otherwise, $\widetilde{\bm \nu}$ is not unique.
Potential workarounds when $\widetilde{\lambda}$ is not simple are discussed in \cite[Section III.D]{Schiano2017ICRA} and \cite[Section IV.B]{Presenza2025CSL}.
However, decentralized and scalable solutions for this problem are still a subject of research.
\end{remark}

Finally, since terms in \eqref{eq:re_grad} vanish when $i \notin \{\iota, \kappa, \ell\}$,
\begin{align}
    \begin{split}
        u^i_{i, r} & \coloneqq - R_i^\top \nabla_{p_i} \mathcal{R}
        = \widetilde{\lambda}^{-1} R_i^\top \nabla_{p_i} \widetilde{\lambda}
        \\
        &= \widetilde{\lambda}^{-1} \sum_{\{j, k\} \subseteq \mathcal{O}_i} \biggr(\left(A_{ijk} \widetilde{\bm \nu}\right)^2 \left(R_i^\top \nabla_{p_i} w_{ijk}\right) \\
        &\qquad + 2 w_{ijk} \left(A_{ijk} \widetilde{\bm \nu}\right) \left(R_i^\top \tfrac{\partial A_{ijk}}{\partial p_i} \widetilde{\bm \nu}\right)\biggr)
        \\
        & + \widetilde{\lambda}^{-1} \sum_{\{i, k\} \subseteq \mathcal{O}_j} \biggr(\left(A_{jik} \widetilde{\bm \nu}\right)^2 R_{ij} \left(R_j^\top \nabla_{p_i} w_{jik}\right) \\
        &\qquad + 2 w_{jik} \left(A_{jik} \widetilde{\bm \nu}\right) R_{ij} \left(R_j^\top \tfrac{\partial A_{jik}}{\partial p_i} \widetilde{\bm \nu}\right)\biggr).
    \end{split}
    \label{eq:re_grad_local_u}
    \\
    \begin{split}
        \omega^i_{i, r} & \coloneqq - S^{-1}\left(R_i^\top \nabla_{R_i} \mathcal{R}\right)
        = \widetilde{\lambda}^{-1} S^{-1} \left(R_i^\top \nabla_{R_i} \widetilde{\lambda}\right) \\
        &= \widetilde{\lambda}^{-1} \sum_{\{j, k\} \subseteq \mathcal{O}_i} \left(A_{ijk} \widetilde{\bm \nu}\right)^2 S^{-1}\left(R_i^\top \nabla_{R_i} w_{ijk}\right) 
    \end{split}
    \label{eq:re_grad_local_w}
\end{align}

\begin{remark}
    Expression \eqref{eq:re_grad_local_u}  contains absolute orientations $R_i, R_j$. However, they are not needed, as the corresponding terms can be computed from body-frame variables; see \eqref{eq:grad_w_1}--\eqref{eq:grad_A_2}.
\end{remark}

\begin{remark}
    Expression \eqref{eq:re_grad_local_u} involves the relative rotation $R_{ij} \coloneqq R_i^\top R_j$, assumed to be available to $j$ if $(j, i) \in \mathcal{E}$.
    This term arises from the need to express gradient contributions in the body reference frames.
    The control itself does not require orientations as part of the state,
     nor are additional constraints imposed to the sensing graph beyond those already required for angle rigidity.
\end{remark}

\begin{remark}
    The rigidity eigenvalue $\widetilde{\lambda}$ and the body-frame eigenvector $\widetilde{\bm \nu}$ are global variables.
    As is common in the literature, we assume that each robot locally estimates $\widetilde{\lambda}$ and $\tilde{\nu}^i_i \coloneqq R_i^\top \tilde{\nu}_i$ by adapting the approach in \cite{Zelazo2015IJRR,Sun2015CCC} to the angle case.
\end{remark}

In what follows, we provide formulas to compute \eqref{eq:re_grad_local_u} and \eqref{eq:re_grad_local_w} from onboard measurements and communications between neighbors.
This requires, for each robot $i$: (a) for each $\{j, k\} \subseteq \mathcal{O}_i$, to measure $\beta^i_{ij}$, $\beta^i_{ik}$, $R_{ij}$, and $R_{ik}$, and to receive $\hat{p}_j$, $\hat{p}_k$, $\tilde{\nu}_j^j$, and $\tilde{\nu}_k^k$; (b) for each $j, k$ such that $\{i, k\} \subseteq \mathcal{O}_j$, to receive the corresponding term in the second summation of \eqref{eq:re_grad_local_u}, which can be computed by $j$.

First, weight $w_{ijk}$ can be computed from the estimated values $d_{ij}$ and $d_{ik}$, and from $\zeta_{ij}$ and $\zeta_{ik}$, obtained from body-frame bearings.
Also,
\begin{equation}
    A_{ijk} \widetilde{\bm{\nu}} =\!(\eta^i_{ijk})^\top(R_{ij} \tilde{\nu}^j_j - \tilde{\nu}^i_i) + (\eta^i_{ikj})^\top(R_{ik} \tilde{\nu}^k_k - \tilde{\nu}^i_i)
    \label{eq:rigidity_control_matrix}
\end{equation}
where $\eta^i_{ijk} \coloneqq P^i_{ij} \beta^i_{ik} / d_{ij}$ and $\eta^i_{ikj} \coloneqq P^i_{ik} \beta^i_{ij} / d_{ik}$.

Second, we specify the gradients of the weights expressed in the body frame.
Let $\sigma'_r$ and $\sigma'_f$ be the derivatives of $\sigma_r$ and $\sigma_f$, respectively.
Then,
\begin{align}
    \begin{split}
    & R_i^\top \nabla_{p_i} w_{ijk} = w_{ijk} \Biggr(\left(\tfrac{\sigma'_r(d_{ij})}{w_{r_{ij}}} - \tfrac{1}{d_{ij}}\right) \beta^i_{ij} + \\
    & \left(\!\tfrac{\sigma'_r(d_{ik})}{w_{r_{ik}}}\!-\!\tfrac{1}{d_{ik}}\right) \beta^i_{ik}\!-\!\tfrac{\sigma'_f \left(\zeta_{ij} \right)}{w_{f_{ij}} d_{ij}} P^i_{ij} e_1\!-\!\tfrac{\sigma'_f(\zeta_{ik})}{w_{f_{ik}} d_{ik}} P^i_{ik} e_1\!\Biggr),
    \end{split}
    \label{eq:grad_w_1}
    \\
    \begin{split}
    & R_i^\top \nabla_{R_i} w_{ijk} =\\
    & \quad w_{ijk} S\left(S(e_1) \left( \tfrac{\sigma'_f\left(\zeta_{ij}\right)}{w_{f_{ij}}} \beta^i_{ij} + \tfrac{\sigma'_f(\zeta_{ik})}{w_{f_{ik}}} \beta^i_{ik} \right) \right), 
    \end{split}
    \label{eq:grad_w_2}
    \\
    \begin{split}
    & R_j^\top \nabla_{p_i} w_{jik} = \\
    & \quad w_{jik} \left(\left(\tfrac{1}{d_{ji}} - \tfrac{\sigma'_r(d_{ji})}{w_{r_{ji}}}\right) \beta^j_{ji} + \tfrac{\sigma'_f \left(\zeta_{ji} \right)}{w_{f_{ji}} d_{ji}} P^j_{ji} e_1\right).
    \end{split}
    \label{eq:grad_w_3}
\end{align}
See Appendix \ref{apx:rigidity_control} for the derivation of the gradients on $\mathrm{SO}(d)$ used in \eqref{eq:grad_w_2}.
It is worth noting that all denominators in \eqref{eq:grad_w_1}--\eqref{eq:grad_w_3} cancel with $w_{ijk}, w_{jik}$, ensuring that all the gradients are well defined. Equivalent expressions without denominators can be derived, but they are more cumbersome.
Additionally, \eqref{eq:grad_w_1}--\eqref{eq:grad_w_2} and \eqref{eq:grad_w_3} can be computed by $i$ and $j$, respectively, using locally available quantities.

Third, we specify the derivatives of $A_{ijk} \widetilde{\bm \nu}$ and $A_{jik} \widetilde{\bm \nu}$.
Define, for each $i, j, k$, the matrices
\begin{align*}
    D^i_{ijk} &\coloneqq \frac{P^i_{ij} \beta^i_{ik} (\beta^i_{ij})^\top + \beta^i_{ij} (\beta^i_{ik})^\top P^i_{ij} + \left((\beta^i_{ij})^\top \beta^i_{ik}\right) P^i_{ij}}{d_{ij}^2}, \\ E^i_{ijk} &\coloneqq \frac{P^i_{ij} P^i_{ik}}{d_{ij} d_{ik}}.
\end{align*}
Then,
\begin{align}
\begin{split}
    R_i^\top \frac{\partial A_{ijk}}{\partial p_i} \widetilde{\bm \nu} & = (D^i_{ijk} - E^i_{ikj}) (R_{ij} \tilde{\nu}^j_j - \tilde{\nu}^i_i) \\
    & + (D^i_{ikj} - E^i_{ijk}) (R_{ik} \tilde{\nu}^k_k - \tilde{\nu}^i_i)
\end{split}
\label{eq:grad_A_1}
\\
\begin{split}    
    R_j^\top \frac{\partial A_{jik}}{\partial p_i} \widetilde{\bm \nu} & = -D^j_{jik} (R_{ji} \tilde{\nu}^i_i - \tilde{\nu}^j_j) \\
    &+ E^j_{jik} (R_{jk} \tilde{\nu}^k_k - \tilde{\nu}^j_j).
\end{split}
\label{eq:grad_A_2}
\end{align}

\subsection{Collision avoidance}
\label{sec:collision_control}

It can be observed that the rigidity maintenance control action depends inversely on the distances from each robot $i$ to its neighbors; see \eqref{eq:rigidity_control_matrix}, \eqref{eq:grad_A_1}, and \eqref{eq:grad_A_2}.
To ensure the well-posedness of the control action, we include a collision avoidance mechanism.
Consider the potential function
\begin{equation}
    \mathcal{C}(\bm p) \coloneqq \sum_{i \in \mathcal{V}} \sum_{j \in \mathcal{O}_i}  \left((d_{ij} - \rho_r) / d_{ij}\right)^2.
    \label{eq:collision_cost}
\end{equation}
Each term of this cost grows unbounded as $d_{ij} \to 0$ and vanishes (with a vanishing derivative) as $d_{ij} \to \rho_r$.
The body-frame collision avoidance action is
\begin{equation}
    \begin{split}
        u^i_{i, c} & \coloneqq - R_i^\top \nabla_{p_i} \mathcal{C} = 2 \rho_r \sum_{j \in \mathcal{O}_{i}} \frac{d_{ij} - \rho_r}{d_{ij}^3} \beta^i_{ij} \\
        &- 2 \rho_r \sum_{i \in \mathcal{O}_j} \frac{d_{ij} - \rho_r}{d_{ij}^3} R_{ij} \beta^j_{ji},
    \end{split}
    \label{eq:collision_action}
\end{equation}
which requires receiving body-frame bearings for each $j$ for which $i \in \mathcal{O}_j$.

\section{Validation}
\label{sec:validation}

\subsection{Cooperative mission}
\label{sec:mission_control}

To validate the proposed approach, a cooperative mission involving multiple target tracking is proposed. 
Each robot $i$ is assigned to track a (potentially distinct) target $\tau_i$.
This is done because, in order to fulfill this objective, robots move and orient their cameras along distinct, time-varying directions.
As a result, changes in the sensing topology are generated, challenging the rigidity condition and testing the capabilities of both the estimator and the controller.
These moving targets may be either physical objects in motion or the output of an algorithm that controls the cameras' motion for environmental exploration and related tasks.

Hence, we employ \eqref{eq:sigma} to define the potential
\begin{equation}
    \mathcal{M}(\bm p, \bm R) \coloneqq \sum_{i \in \mathcal{V}} \mu_r \sigma_r(d_{i\tau_i}) + \mu_f (1 - \sigma_f(\zeta_{i\tau_i})),
    \label{eq:target_cost}
\end{equation}
with $\mu_r, \mu_f > 0$.
For this application, a suitable choice of constants is $\mu_r = 1.5$, $\mu_f = 1.0$, $\sigma_r(x) \coloneqq \sigma(x; 0, \rho_t)$, $\sigma_f(x) \coloneqq \sigma(x; -1, 1)$, where $\rho_t$ can be interpreted as a tracking radius.
The (negative) gradient flow of \eqref{eq:target_cost} attracts the robot $i$ and orients its camera toward a designated moving target $\tau_i$ if $d_{i \tau_i} < \rho_t$. 
The mission-related control action for robot $i$ results in
\begin{equation}
\begin{split}
    u^i_{i, m} & \coloneqq\!- R_i^\top \nabla_{p_i} \mathcal{M} = \mu_r \sigma'_r(d_{i \tau_i}) \beta^i_{i \tau_i} - \mu_f \sigma'_f(\zeta_{i \tau_i}) \tfrac{P^i_{i \tau_i}}{d_{i \tau_i}} e_1 \\
    \omega^i_{i, m} & \coloneqq\!- S^{-1} \left(R_i^\top \nabla_{R_i} \mathcal{M}\right) = \mu_f \sigma'_f(\zeta_{i \tau_i}) S(e_1) \beta^i_{i \tau_i}, 
\end{split}
\label{eq:target_control}
\end{equation}
which requires only the body-frame bearing.
We assume that the distance to the target $d_{i\tau_i}$ is provided by the underlying tracking algorithm, either by specifying it as a setpoint or by estimating it from the camera images. The implementation of this procedure is beyond the scope of this work.
Also, the target tracking control action depends inversely on $d_{i \tau_i}$; therefore, a term corresponding to robot-target collision should be added to \eqref{eq:collision_cost} and \eqref{eq:collision_action}.

\subsection{Simulation Results}

To validate the proposed control scheme, we performed simulations aimed at evaluating its performance and robustness under challenging conditions.
We employed $N = 5$ robots in $\mathbb{R}^3$.
At $t=\SI{0}{\second}$, the robots' positions were randomly generated within a cube of $(40 \si{\meter})^3$ and centered at $[\SI{80}{\meter}; \SI{20}{\meter}; \SI{20}{\meter}]$, see Fig. \ref{fig:snapshots}.
The cameras' axes were oriented toward the barycenter of all positions to obtain a sufficiently dense sensing graph at initialization.
The cameras' parameters were set to $\rho_r = \SI{30}{\meter}$ and $\rho_f = 0.5$ (corresponding to a $\phi = \SI{120}{\degree}$ field-of-view angle).
 The weights were defined using the values $a_r = 0.8 \rho_r$, $b_r = \rho_r$, $a_f = \rho_f$, and $b_f = 1.4 \rho_f$.
The gains of the controller (see \eqref{eq:control_action} and \eqref{eq:target_control}) were $\gamma_r = 5$, $\gamma_c = 1 $, $\gamma_m = 25$, $\gamma'_r = 0.5$, $\gamma'_m = 2.5$, $\mu_r = 6$, and $\mu_f = 1$.
For this setup, $3$ targets were used, which were tracked by robots $1$, $\{2, 4\}$ and $\{3, 5\}$, respectively; hence $\tau_2 = \tau_4$ and $\tau_3 = \tau_5$.
They followed predefined trajectories within the cube $[\SI{0}{\meter}, \SI{100}{\meter}]^3$.
Example videos can be found in \cite{SimuVideo1}.

It can be seen that at $t=\SI{0}{\second}$ all robots are engaged in mutual sensing, ensuring infinitesimal angle rigidity.
Once the system is deployed, each robot $i$ immediately begins to move closer to and aim at its assigned target $\tau_i$. Because of asymmetries in the spatial configuration, some robots will be able to point towards their targets sooner than others, thereby losing visual contact with the rest of the team. The remaining robots must then focus on preserving the necessary bearing measurements (and thus the angles) required to maintain angle rigidity, since the relative importance of $\mathcal R$ eventually exceeds $\mathcal M$.
In this example (Fig. \ref{fig:snapshots}), three robots (the maximum possible for $N=5$) lose visual contact with the team and successfully focus on tracking.  The remaining two will adjust their positions and camera orientations in order to maintain angle rigidity throughout the mission.

Fig. \ref{fig:performance_metrics} presents performance metrics. Fig. \ref{fig:eigenvalues} shows the evolution of the angle rigidity eigenvalue, including both weighted and unweighted versions.
The latter was scaled using the distances $\{d_{ij} d_{ik} : (i, j, k) \in \mathcal{A}\}$ for appropriate comparison.
It can be seen how the rigidity eigenvalue $\lambda_8$ was kept positive at all times, ensuring the system's infinitesimal angle rigidity.
In addition, $\widetilde{\lambda}_8 \leq \lambda_8$ and, although they start close to each other, their difference increases rapidly after $t \approx \SI{40}{\second}$, indicating that the robot configuration induces weights that are positive but small.
This lets the controller know that some angle measurements are close to being lost.
It is interesting to notice that $\lambda_8$ (which is the one that determines the performance of the angle-based estimators) remains approximately constant, despite the number of edges (bearing measurements) decreasing over time as the robots disperse.
This demonstrates the controller’s ability to compensate for the loss of bearing measurements by optimizing the robots’ geometric configuration.

Fig. \ref{fig:targets} displays how the targets are visually tracked by the robots as the mission progresses.
It shows the angles between the camera axes $R e_1$ and the bearing vectors $\beta_{i \tau_i}$, i.e., $\arccos(\zeta_{i\tau_i})$.
For target $\tau_i$ to be captured by robot $i$'s camera, the angle $\arccos(\zeta_{i\tau_i})$ must be smaller than half of the field of view (\SI{60}{\degree}). Moreover, these angles decrease as the robot better aims at the target.
Here, the lines that remain below \SI{60}{\degree} correspond to robots $1$, $4$, and $5$; therefore, each target is successfully tracked by one agent.
Figure~\ref{fig:localization} presents the position estimation error obtained by applying the localization scheme described in Section~\ref{sec:anl}, where $\iota = 0$, $\kappa = 1$, $\gamma_a = 1000$ and $\gamma_s = 0.05$.
For each robot $i$, the initial estimate $\hat{p}_i$ was generated by sampling from a normal distribution with mean $p_i$ and standard deviation $\SI{2}{\meter}$.
The errors show a sustained decrease as the mission progresses and remain below $\approx \SI{0.5}{\meter}$, demonstrating the effectiveness of the angle-based estimator despite time-varying conditions.

A second configuration was tested, using $N=8$ robots and a single target, Fig. \ref{fig:snapshots_b}. In this case, the target moves rapidly along a trajectory contained in the surface of a sphere that encompasses all the robots, inducing fast camera motions and frequent topology changes. 
Performance metrics are shown in Fig. \ref{fig:performance_metrics_2} and example videos in \cite{SimuVideo2}.
As intended, the robots manage to maintain rigidity while most robots points toward the target.
Interestingly, as the mission progresses, the controller drives the system towards a spatial configuration in which loss of visual contact becomes less frequent.

\begin{figure}
    \centering
    \begin{subfigure}{0.49\columnwidth}
        \includegraphics[width=\columnwidth, trim=0mm 0mm 50mm 18mm, clip]{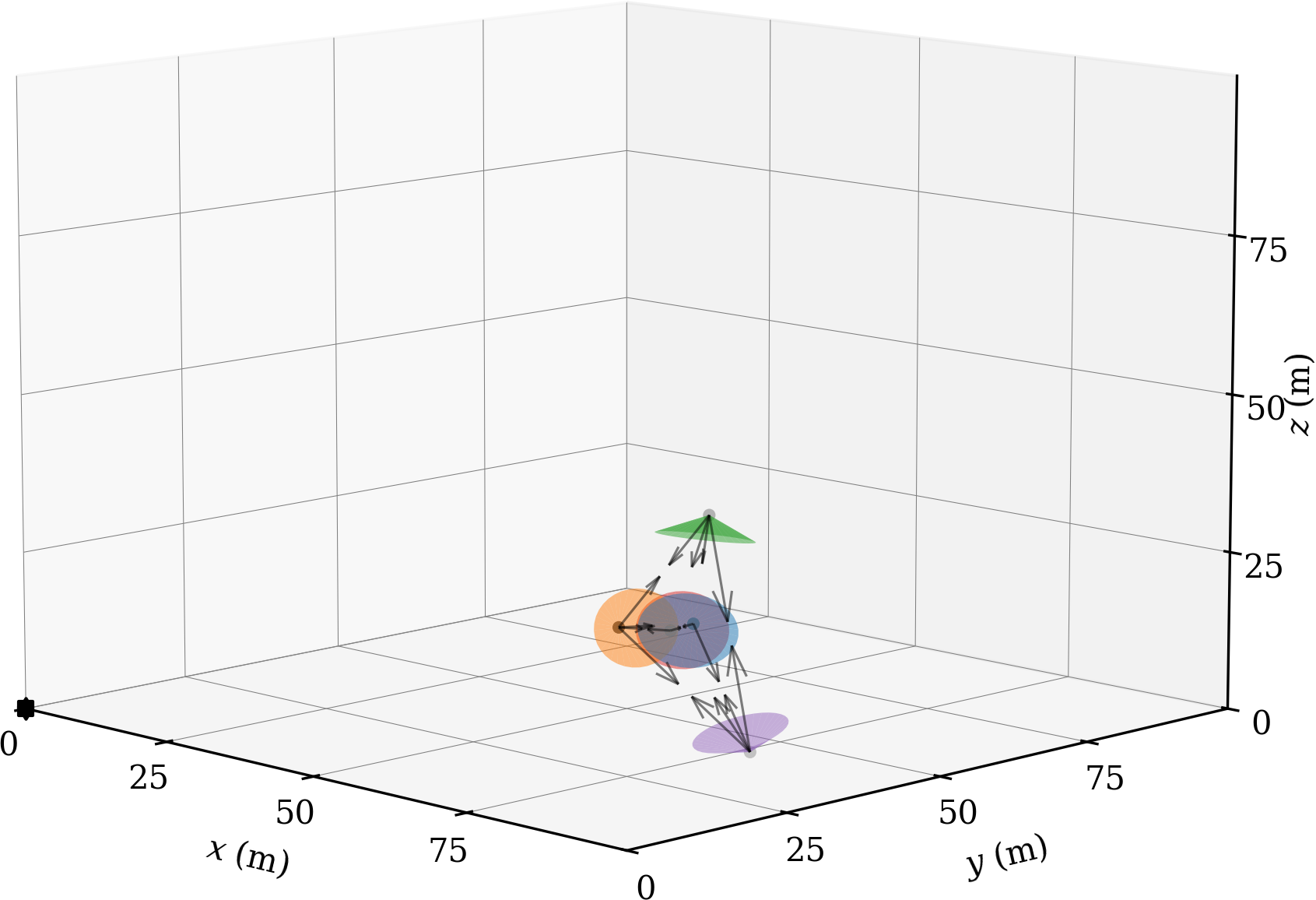}
        \caption{$t = \SI{0}{\second}$.}
    \end{subfigure}
    \begin{subfigure}{0.49\columnwidth}
        \includegraphics[width=\columnwidth, trim=5mm 3mm 45mm 15mm, clip]{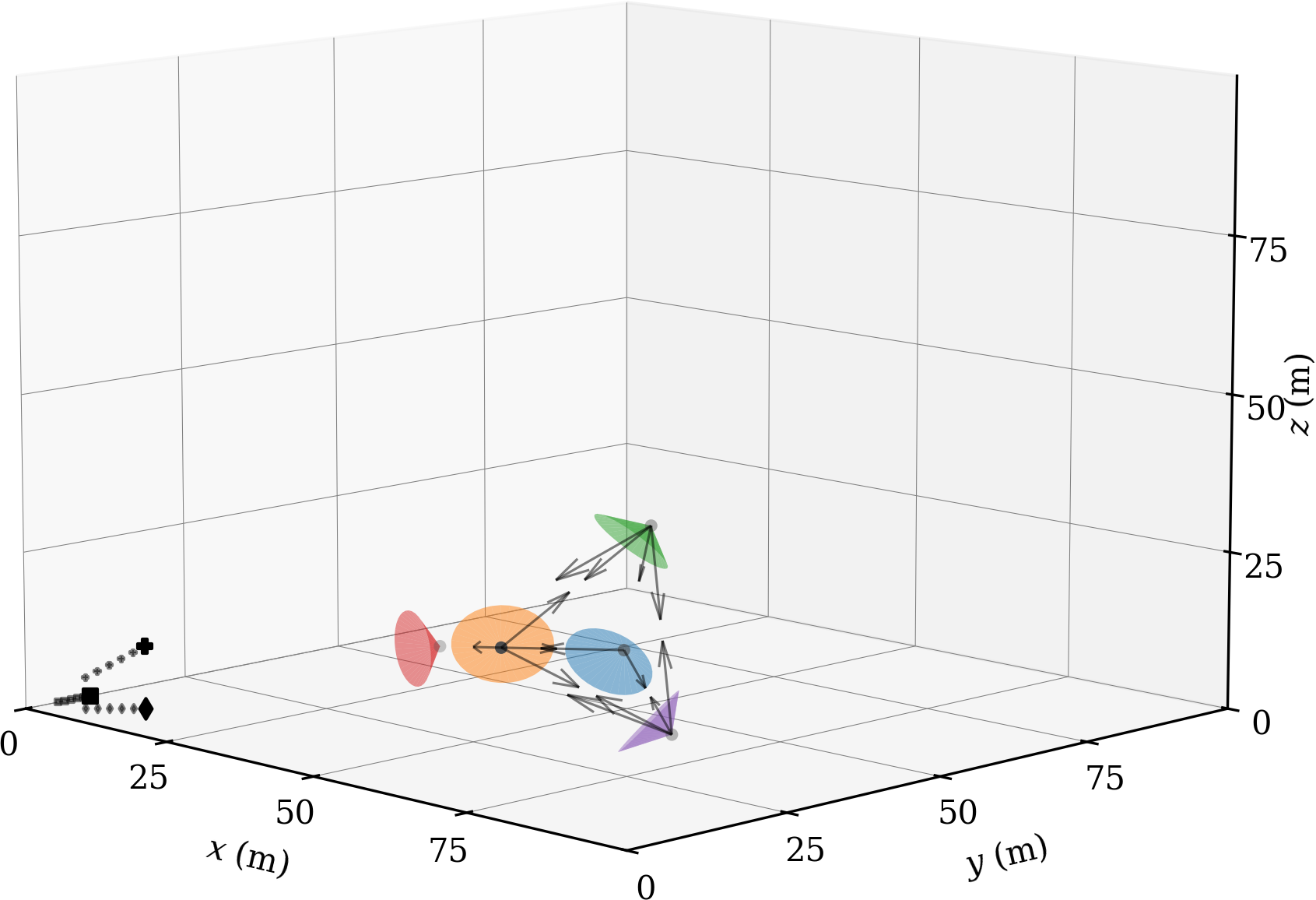}
        \caption{$t = \SI{10}{\second}$.}
    \end{subfigure}
    \begin{subfigure}{0.49\columnwidth}
        \includegraphics[width=\columnwidth, trim=35mm 15mm 15mm 3mm, clip]{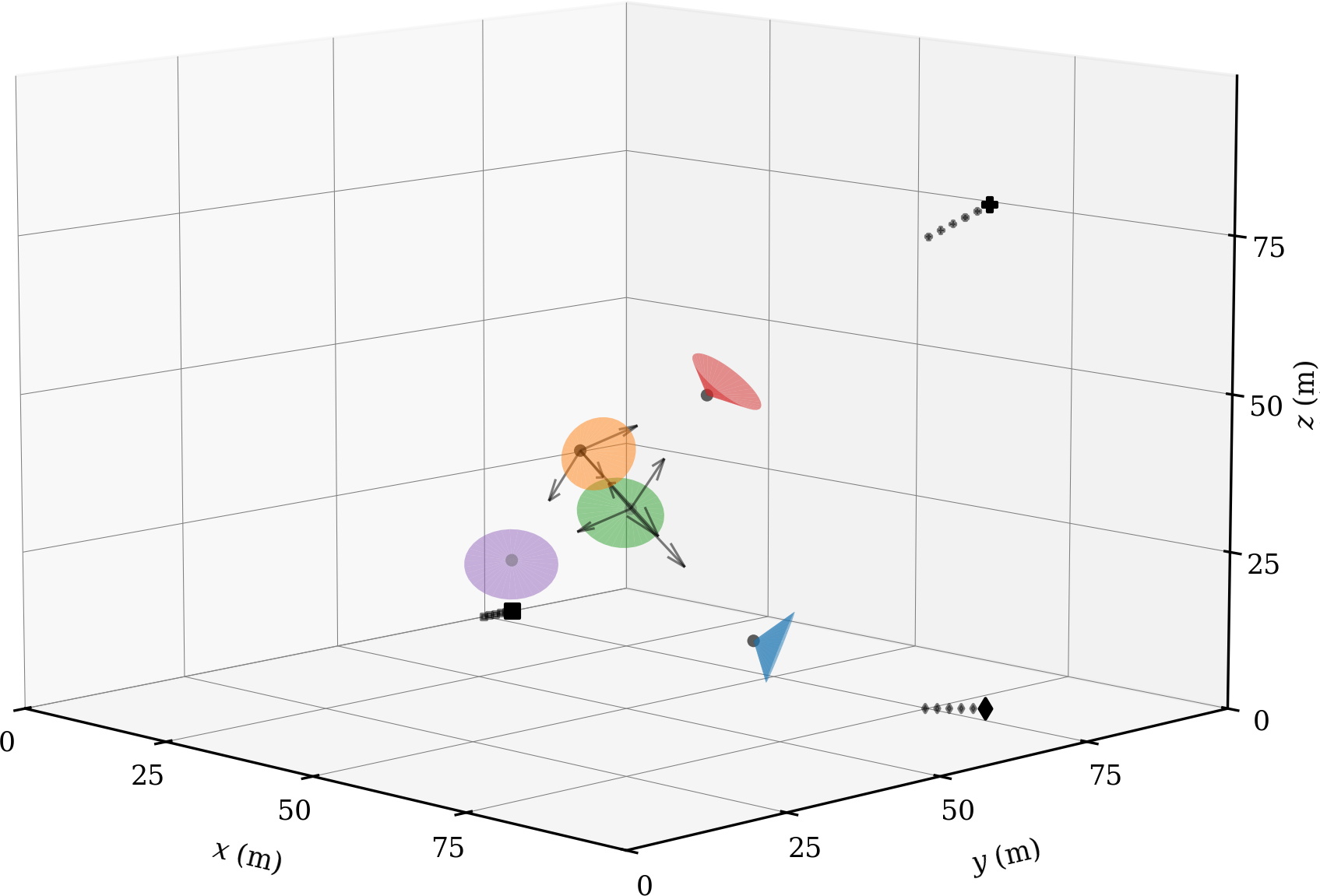}
        \caption{$t = \SI{80}{\second}$.}
    \end{subfigure}
    \begin{subfigure}{0.49\columnwidth}
        \includegraphics[width=\columnwidth, trim=30mm 18mm 20mm 0mm, clip]{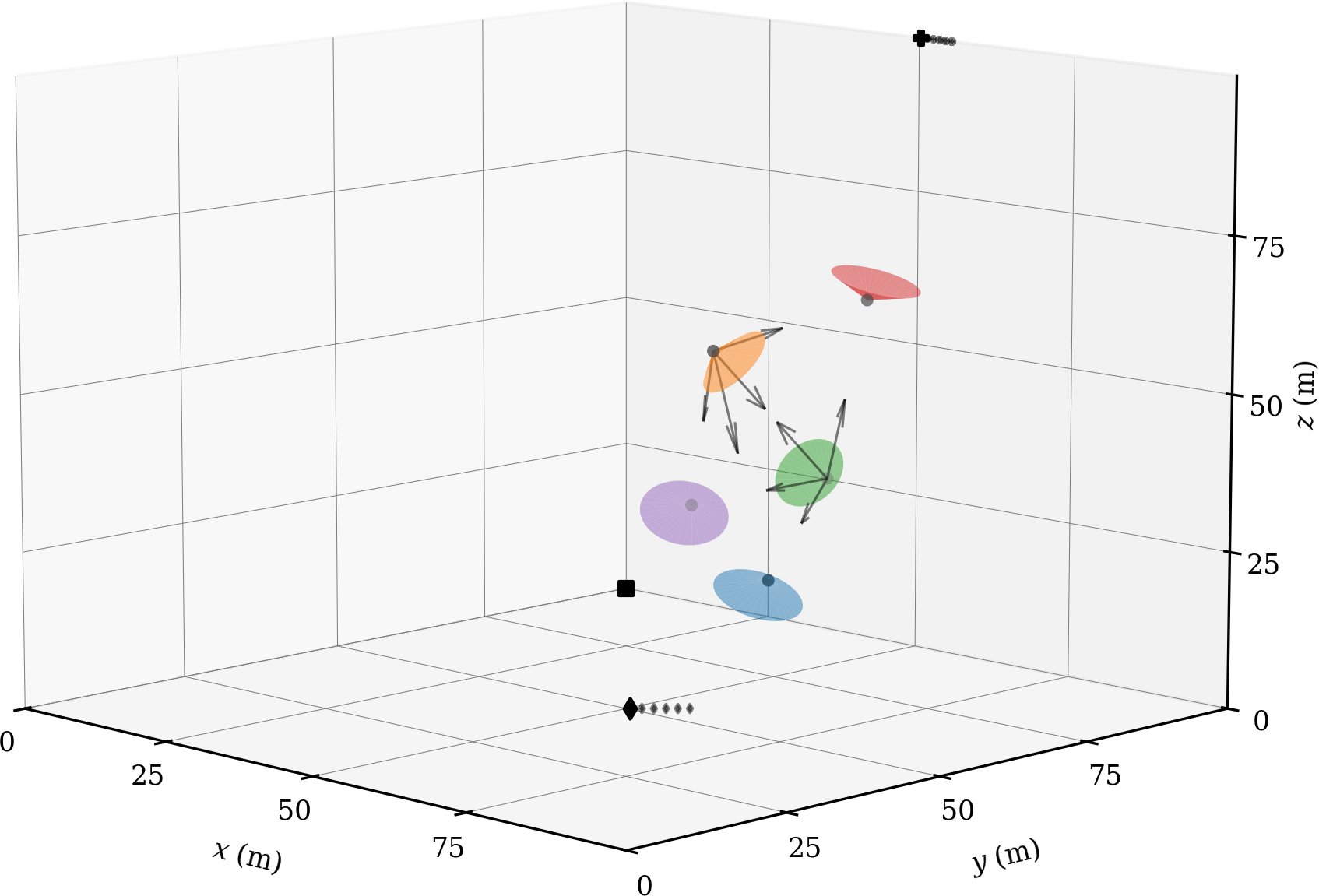}
        \caption{$t = \SI{150}{\second}$.}
    \end{subfigure}
    \caption{State of the system at different stages. Cameras are depicted as cones with apex located at $p_i$ and aiming at $R_i e_1$. The field of view $\phi$ is drawn to scale, while the range $\rho_r$ is not.
    Arrows represent acquired bearing measurements.}
    \label{fig:snapshots}
\end{figure}

\begin{figure}[!tb]
    \centering
    \begin{subfigure}{\columnwidth}
        \includegraphics[width=\columnwidth]{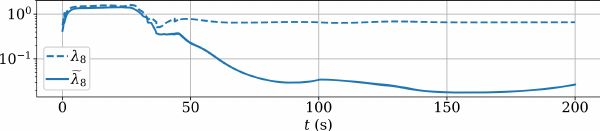}
        \caption{Rigidity eigenvalues: weighted $\widetilde{\lambda}_8$ and unweighted $\lambda_8$.}
        \label{fig:eigenvalues}
    \end{subfigure}
    \begin{subfigure}{\columnwidth}
        \includegraphics[width=\columnwidth]{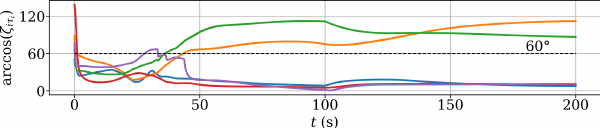}
        \caption{Aiming: an angle under $\SI{60}{\degree}$ indicates that the target is within the field of view of robot $i=1, \ldots, 5$.}
        \label{fig:targets}
    \end{subfigure}
    \begin{subfigure}{\columnwidth}
        \includegraphics[width=\columnwidth]{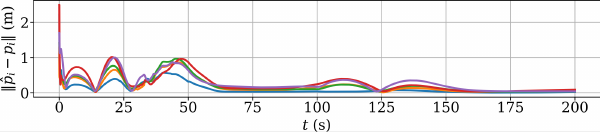}
        \caption{Localization error for robot $i=1, \ldots, 5$.}
        \label{fig:localization}
    \end{subfigure}
    \caption{Control performance metrics.}
    \label{fig:performance_metrics}
\end{figure}

\begin{figure}
    \centering
    \begin{subfigure}{0.49\columnwidth}
        \includegraphics[width=\columnwidth, trim=20mm 18mm 30mm 13mm, clip]{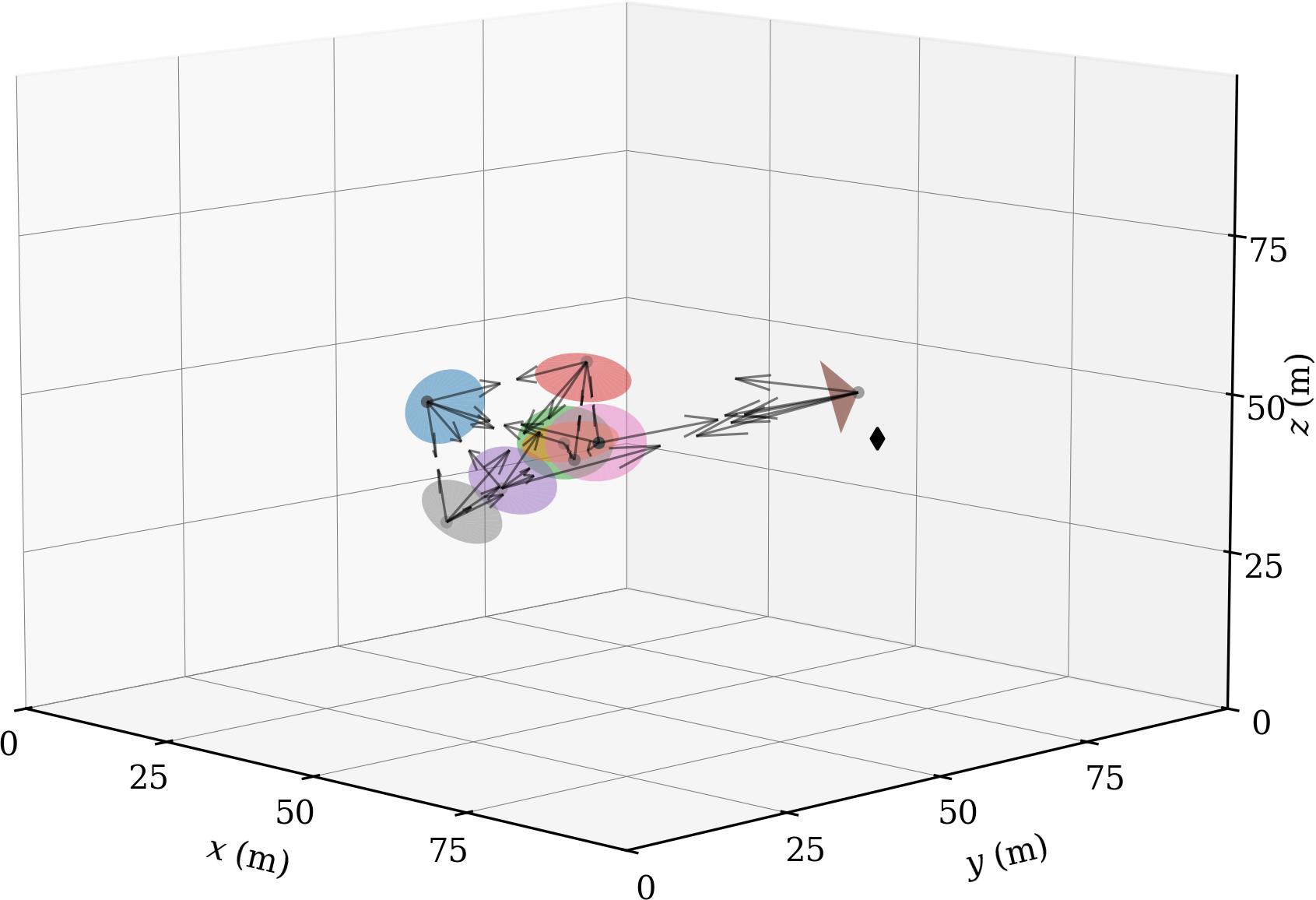}
        \caption{$t = \SI{0}{\second}$.}
    \end{subfigure}
    \begin{subfigure}{0.49\columnwidth}
        \includegraphics[width=\columnwidth, trim=20mm 18mm 30mm 13mm, clip]{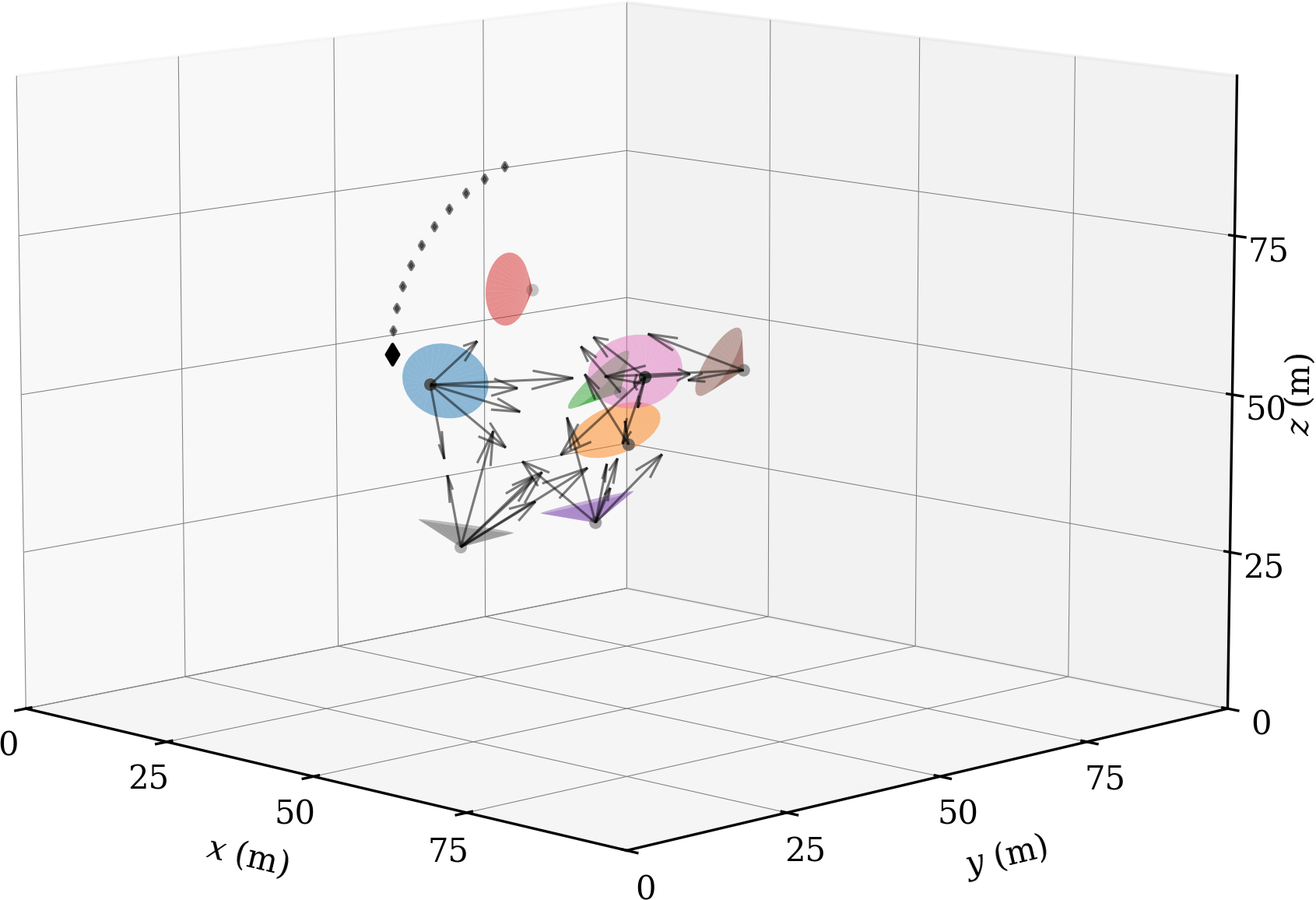}
        \caption{$t = \SI{25}{\second}$.}
    \end{subfigure}
    \begin{subfigure}{0.49\columnwidth}
        \includegraphics[width=\columnwidth, trim=20mm 18mm 30mm 13mm, clip]{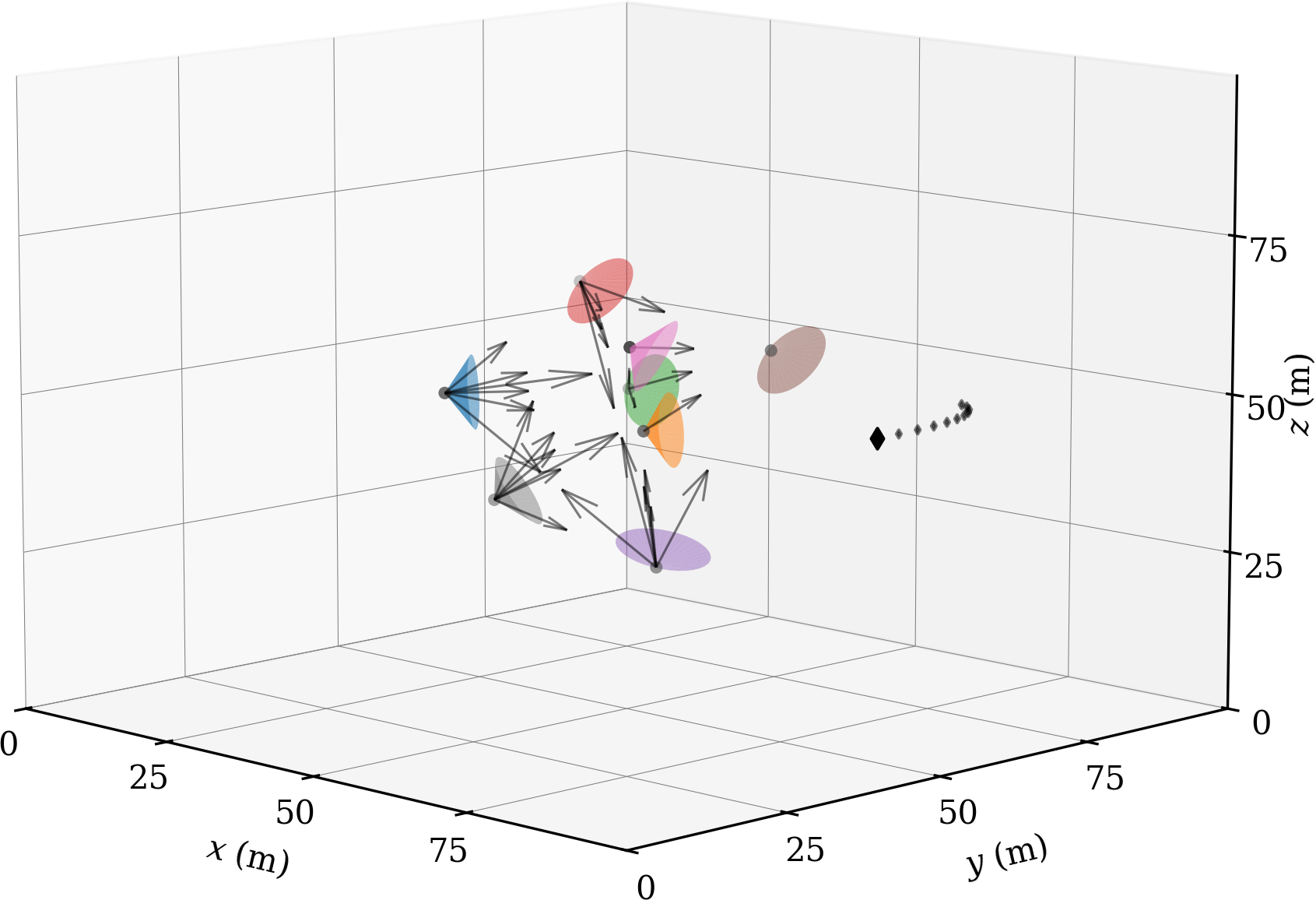}
        \caption{$t = \SI{50}{\second}$.}
    \end{subfigure}
    \begin{subfigure}{0.49\columnwidth}
        \includegraphics[width=\columnwidth, trim=20mm 18mm 30mm 13mm, clip]{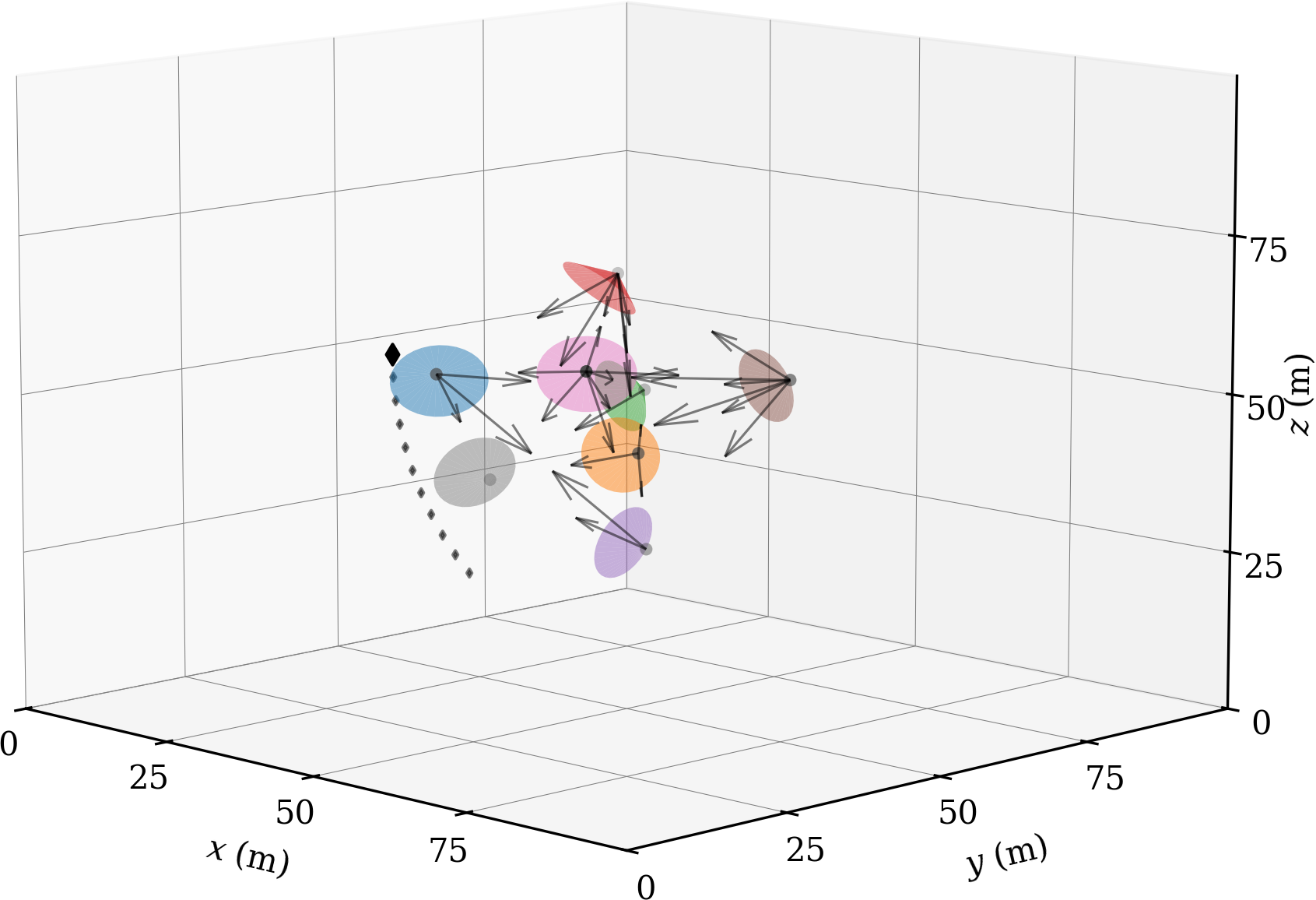}
        \caption{$t = \SI{75}{\second}$.}
    \end{subfigure}
    \caption{State of the system at different stages (second case).}
    \label{fig:snapshots_b}
\end{figure}

\begin{figure}[!tb]
    \centering
    \begin{subfigure}{\columnwidth}
        \includegraphics[width=\columnwidth]{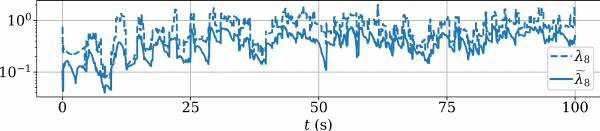}
        \caption{Rigidity eigenvalues: weighted $\widetilde{\lambda}_8$ and unweighted $\lambda_8$.}
    \end{subfigure}
    \begin{subfigure}{\columnwidth}
        \includegraphics[width=\columnwidth]{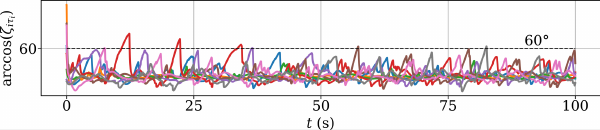}
        \caption{Aiming: an angle under $\SI{60}{\degree}$ indicates that the target is within the field of view of robot $i=1, \ldots, 5$.}
    \end{subfigure}
    \caption{Control performance metrics (second case).}
    \label{fig:performance_metrics_2}
\end{figure}

\section{Conclusions and Future Work}
\label{sec:conclusions}

This work studied angle-based localization and decentralized rigidity maintenance control for multi-robot networks with sensing constraints.
First, we provided a formal equivalence between bearing rigidity and angle rigidity that considers \textit{directed} sensing graphs and \textit{body-frame} bearing measurements in both $2$ and $3$-dimensional space.
Subsequently, we proposed an angle-based localization protocol for arbitrary graphs that only requires infinitesimal angle rigidity, and demonstrated its local convergence under time-varying topologies.

Additionally, the angle rigidity eigenvalue was proposed as a measure of a framework's degree of rigidity.
Closed-form expressions for the rigidity eigenvalue and its gradient were derived.
Subsequently, a decentralized rigidity maintenance controller was proposed and validated through simulations for the application of cooperative multi-target tracking.
This demonstrated the controller's ability to handle multiple objectives and sensing constraints while avoiding collisions and maintaining a sufficient level of angle rigidity.
This analysis revealed that, although relative rotations are not necessary for angle-based localization, they are required for rigidity maintenance, thus implying additional sensing capabilities.
Nevertheless, angle-based approach remain advantageous since they decouple the position estimation from the robots' orientations (a) liberating some robots from mutual sensing; (b) eliminating the introduction of orientation errors into the estimator.

Future work will continue along the following directions.
First, we plan to investigate the problem of estimating $R_{ij}$ without introducing additional sensing assumptions.
Second, it is interesting to explore methods that let robots dynamically select appropriate angle subsets from the observed bearings.
This would reduce angle redundancy, decreasing computation and communication load.
Finally, relying on the availability of the rigidity eigenvalue and eigenvector faces scalability issues; thus, it is worth exploring alternatives such as the use of \textit{subframeworks} proposed in \cite{PresenzaACC2022,Presenza2025CSL}.

\bibliographystyle{plain}        
\bibliography{refs}

\appendix
\section{Appendix}    

\subsection{A useful lemma}
\label{apx:lemmas}

\begin{lemma}
    Let $x, y_1, y_2, z_1, z_2 \in \mathbb{R}^3$ such that $z_1 \neq \pm z_2$, $\| z_1 \| = \| z_2 \| = 1$, $z_1 \perp y_1$, $z_2 \perp y_2$. The system of equations $S(z_1) x = y_1 \quad \text{and} \quad S(z_2) x = y_2$ with unknown $x$ has a solution if and only if $\left(S(z_1) y_1\!-\!S(z_2) y_2\right)^\top S(z_1) z_2 = 0$.
    \label{lem:linear_system_2}
\end{lemma}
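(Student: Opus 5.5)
We would prove the lemma by a direct decomposition of $\mathbb{R}^3$ adapted to $z_1$ and $z_2$. Since $z_1\neq\pm z_2$ are unit vectors, they are linearly independent, so $\{z_1, z_2, n\}$ with $n \coloneqq S(z_1) z_2 \neq 0$ is a basis of $\mathbb{R}^3$. Recall that $S(z)x = z\times x$ has kernel $\operatorname{span}\{z\}$ and image $z^\perp$. Moreover, for $\|z\|=1$ and $y\perp z$, the identity $z\times(z\times x) = z(z^\top x) - x$ shows that the general solution of $S(z)x=y$ is
\begin{equation*}
x = -S(z) y + t z, \qquad t\in\mathbb{R}.
\end{equation*}
The hypotheses $z_1\perp y_1$ and $z_2\perp y_2$ guarantee that each equation is individually solvable. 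The question is therefore only whether the two affine lines of solutions $\{-S(z_1)y_1 + t_1 z_1\}$ and $\{-S(z_2)y_2 + t_2 z_2\}$ intersect.

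The two lines intersect if and only if there exist $t_1,t_2$ with
\begin{equation*}
S(z_1)y_1 - S(z_2)y_2 = t_1 z_1 - t_2 z_2,
\end{equation*}
that is, if and only if $w \coloneqq S(z_1)y_1 - S(z_2)y_2$ lies in $\operatorname{span}\{z_1,z_2\}$. Because $z_1,z_2$ are independent, this span is a plane whose orthogonal complement is $\operatorname{span}\{n\}$. Hence $w\in\operatorname{span}\{z_1,z_2\}$ if and only if $w^\top n = 0$, which is exactly the stated condition $\left(S(z_1)y_1 - S(z_2)y_2\right)^\top S(z_1)z_2 = 0$. Both directions follow at once:
\begin{itemize}
\item if a solution $x$ exists, it lies on both lines, which forces $w\in\operatorname{span}\{z_1,z_2\}$;
\item conversely, if $w^\top n=0$, then $w\in\operatorname{span}\{z_1,z_2\}$, so suitable $t_1,t_2$ exist and produce a common solution.
\end{itemize}

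The only step requiring care is the explicit parametrization of the solution set of $S(z)x=y$. Here one must verify that $-S(z)y$ is indeed a particular solution, which uses $z^\top y = 0$ together with the double cross product identity, and that the kernel is exactly $\operatorname{span}\{z\}$. The rest is linear algebra in the basis $\{z_1,z_2,n\}$. I expect no real obstacle, beyond checking the nondegeneracy $n\neq 0$, which is precisely where $z_1\neq\pm z_2$ is used.
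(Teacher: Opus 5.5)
Your proof is correct and takes the same route as the paper: write each solution set as $-S(z_i)y_i + \operatorname{span}\{z_i\}$, reduce the existence of a common solution to $S(z_1)y_1 - S(z_2)y_2 \in \operatorname{span}\{z_1,z_2\}$, and test membership by orthogonality to $S(z_1)z_2$. You also verify the particular solution with the double cross product identity and check that $S(z_1)z_2 \neq 0$; the paper leaves both steps implicit.
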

\begin{proof}
    First, observe that $x^*_i \coloneqq \operatorname{span}\{z_i\} - S(z_i) y_i$ denotes the solution set of each respective sub-system.
    A common solution $x^\ast$ exists if and only if $S(z_1)y_1 - S(z_2)y_2 \in \operatorname{span}\{z_1, z_2\}$.
    Due to the linear independence between $z_1$ and $z_2$, this is equivalent to $S(z_1) y_1 -  S(z_2) y_2 \ \perp \ S(z_1) z_2$, and the claim follows.
\end{proof}

\subsection{Gradient computation on $\mathrm{SO}(d)$}
\label{apx:rigidity_control}

Here we show how to compute the gradient of a quadratic function $f : \mathrm{SO}(d) \to \mathbb{R}$, $f(R) = x^\top R y$, with respect to the Riemannian metric $\langle A, B \rangle \coloneqq \frac{1}{2} \operatorname{Tr}(A^\top B)$.
This is a required step to obtain \eqref{eq:grad_w_2}. To do that, consider
\begin{equation*}
    g :\mathbb{R}^{d \times d} \to \mathbb{R}, \ g(A) = x^\top A y \implies \nabla_A g(A) = 2 x y^\top.
\end{equation*}
Then,
\begin{equation*}
    \nabla_R f(R) = R \operatorname{skew}\left(R^\top \nabla_A g(R)\right)
\end{equation*}
where $\mathrm{skew}(A) \coloneqq \frac{1}{2}\left(A - A^\top\right)$, see \cite{Absil2008Princeton}.
Therefore,
\begin{equation*}
    R^\top \nabla_R \left(x^\top R y\right)\!=\!\operatorname{skew}\!\left(R^\top 2 x y^\top \right)\!=\!\left(R^\top x y^\top\!-\!y x^\top R \right).
\end{equation*}

\end{document}